\pgfplotsset{compat=newest}
\pgfplotsset{plot coordinates/math parser=false}
\newtheorem{problem}{Problem}
\newtheorem{remark}{Remark}
\DeclarePairedDelimiter\ceil{\lceil}{\rceil}
\DeclarePairedDelimiter\floor{\lfloor}{\rfloor}
\newcommand{\figref}[1]{Fig.~\ref{fig:#1}}
\newcommand{\thmref}[1]{Theorem~\ref{thm:#1}}
\newcommand{\LE}[1]{$\blacktriangledown$\footnote{LESI: #1}}
\newcommand{\PA}[1]{$\spadesuit$\footnote{MP: #1}}
\def\@mkbibcitation{\relax}
\begin{document}

\title{Integrating Security in Resource-Constrained Cyber-Physical Systems}
\titlenote{This work was supported in part by the NSF CNS-1652544 and CNS-1505701 grants, and the Intel-NSF Partnership for Cyber-Physical Systems Security and Privacy. This material is also based on research sponsored by the ONR under agree- ments number N00014-17-1-2012 and N00014-17-1-2504.}

\author{Vuk Lesi}
\affiliation{
  \institution{Duke University}
  \city{Durham}
  \state{North Carolina}
  \country{USA}
}
\email{vuk.lesi@duke.edu}

\author{Ilija Jovanov}
\affiliation{
  \institution{Duke University}
  \city{Durham}
  \state{North Carolina}
  \country{USA}
}
\email{ilija.jovanov@duke.edu}

\author{Miroslav Pajic}
\affiliation{
  \institution{Duke University}
  \city{Durham}
  \state{North Carolina}
  \country{USA}
}
\email{miroslav.pajic@duke.edu}

\renewcommand{\shortauthors}{}

\begin{abstract}
Defense mechanisms against network-level attacks are commonly based on the use of cryptographic techniques, such as lengthy message authentication codes (MAC) that provide data integrity guarantees. However, such mechanisms require significant resources (both computational and network bandwidth), which  prevents their continuous use in resource-constrained cyber-physical systems (CPS).
Recently, it was shown how physical properties of controlled systems can be exploited to relax these stringent requirements for systems where sensor measurements and actuator commands are transmitted over a potentially compromised network; specifically, that merely intermittent use of data authentication (i.e., at occasional time points during system execution), can still  provide strong Quality-of-Control (QoC) guarantees even in the presence of false-data injection attacks, such as \emph{Man-in-the-Middle} (MitM) attacks. Consequently, in this work we focus on integrating security into existing resource-constrained CPS, in order to protect against MitM attacks on a system where a set of control tasks communicates over a real-time network with system sensors and actuators. We introduce a design-time methodology that incorporates requirements for QoC in the presence of attacks into end-to-end timing constraints for real-time control transactions, which include data acquisition and authentication, real-time network messages, and control tasks. This allows us to formulate a mixed integer linear programming-based method for direct synthesis of schedulable task and message parameters (i.e., deadlines and offsets) that do not violate timing requirements for the already deployed controllers, while adding a sufficient level of protection against network-based attacks; specifically, the synthesis method also provides suitable intermittent authentication policies that ensure the desired QoC levels under attack. 
To additionally reduce the security-related  bandwidth overhead, we propose the use of cumulative message authentication at time instances when integrity of messages from subsets of sensors should be ensured. Furthermore, we introduce a method for opportunistic use of remaining resources to further improve the overall QoC guarantees while ensuring system (i.e., task and message) schedulability. Finally, we demonstrate applicability and scalability of our methodology on synthetic automotive systems as well as a real-world automotive case-study.

\end{abstract}


\begin{CCSXML}
<ccs2012>
<concept>
<concept_id>10002978.10003006.10003013</concept_id>
<concept_desc>Security and privacy~Distributed systems security</concept_desc>
<concept_significance>500</concept_significance>
</concept>
<concept>
<concept_id>10002978.10002997.10002999</concept_id>
<concept_desc>Security and privacy~Intrusion detection systems</concept_desc>
<concept_significance>300</concept_significance>
</concept>
<concept>
<concept_id>10010520.10010553</concept_id>
<concept_desc>Computer systems organization~Embedded and cyber-physical systems</concept_desc>
<concept_significance>500</concept_significance>
</concept>
<concept>
<concept_id>10010520.10010553.10010562</concept_id>
<concept_desc>Computer systems organization~Embedded systems</concept_desc>
<concept_significance>100</concept_significance>
</concept>
<concept>
<concept_id>10010520.10010553.10010562.10010564</concept_id>
<concept_desc>Computer systems organization~Embedded software</concept_desc>
<concept_significance>100</concept_significance>
</concept>
<concept>
<concept_id>10011007.10010940.10010992.10010993.10010995</concept_id>
<concept_desc>Software and its engineering~Real-time schedulability</concept_desc>
<concept_significance>500</concept_significance>
</concept>
<concept>
<concept_id>10003752.10003809.10003716.10011138.10010041</concept_id>
<concept_desc>Theory of computation~Linear programming</concept_desc>
<concept_significance>300</concept_significance>
</concept>
</ccs2012>
\end{CCSXML}

\ccsdesc[500]{Security and privacy~Distributed systems security}
\ccsdesc[300]{Security and privacy~Intrusion detection systems}
\ccsdesc[500]{Computer systems organization~Embedded and cyber-physical systems}
\ccsdesc[100]{Computer systems organization~Embedded systems}
\ccsdesc[100]{Computer systems organization~Embedded software}
\ccsdesc[500]{Software and its engineering~Real-time schedulability}
\ccsdesc[300]{Theory of computation~Linear programming}

\keywords{Cyber-Physical Systems security, real-time scheduling, quality-of-control, mixed integer linear programming.}

\maketitle


\section{Introduction}
\label{sec:intro}


In this work, we focus on securing resource-constrained cyber-physical systems (CPS) from network-based false-data injection attacks over low-level networks used for real-time communication of control-related messages.  With these \emph{Man-in-the-Middle} (MitM) attacks, attackers can inject  maliciously crafted data into communication between sensors and controllers, forcing a controlled physical plant into a potentially unsafe state; this is achieved either directly (by injecting false control commands) or through actions of the controller (if sensor measurements are falsified). Several such attacks have been reported recently (e.g.,~\cite{car_security2010, car_security2011, stuxnet2011,stuxnet2}); 
for example, 
susceptibility of modern automotive systems to this type of attacks was  illustrated in e.g.,~\cite{wired_jeep15, car_security2011}.
These attacks are especially threatening as they enable a \emph{remote} attacker to compromise safety-critical control features of a system, by taking over some of the components with access to a low-level safety-critical network used for control, before using them to transmit malicious control-related messages. 


Protection against this type of attacks is commonly based on data integrity enforcements using message authentication. Standard  methods for ensuring authenticity of sensor data require signing of message authentication codes (MACs) on the sensor electronic control units (ECUs), transmitting sensor measurements along with the MACs, and verification of the MACs at the controller ECUs. However, due to security-related overhead this approach may not be applicable to resource-constrained embedded platforms, which are especially dominant in legacy systems. For example, our experiments on a $96~MHz$ ARM Cortex-M3-based ECU show that executing a single-input-single-output PID controller update takes approximately $5\mu s$, while signing a $128$~bit MAC over a single measurement requires around~$100~\mu s$. Thus, resource constraints may make~it~infeasible to provide continuous protection of sensing data by authenticating every transmitted~sensor measurement. Consequently, in this work we seek to answer the question exactly how much security enforcement is sufficient and how can we exploit available system resources in order to improve the overall security guarantees, in terms of Quality-of-Control (QoC) in the presence~of~attack.

Due to the recently reported security incidents, the problem of securing CPS has drawn significant attention, with research efforts focused on the impact of false-data injection attacks on system performance (mainly QoC), as well as the design of attack-detectors and attack-resilient controllers using a physical model of the system (e.g.,~\cite{pasqualetti2013attack,ncs_attack_models,fawzi_tac14,pajic_tcns17,pajic_iccps14,miao_tcns17,shoukry2018smt}). One of the main results is that even when physics-based intrusion detectors are used, by changing messages received at the controller from a subset of system sensors, an attacker could launch stealthy (i.e., non-detectable) attacks that force the plant into any undesired state through the actions of the controller~\cite{mo2010false,kwon2014stealthy,smith_decoupled_attack11}.
%

On the other hand, we have recently shown how physical properties of the controlled system under consideration,  can be exploited to relax integrity requirements for secure control~\cite{jovanov_cdc17,jovanov_arxiv17,jovanov_cdc18}. Furthemore, by computing reachable regions of the state estimation error under stealthy attacks, control performance under attack can be evaluated for intermittent integrity enforcement policies -- i.e., policies that only intermittently employ message authentication. In~\cite{lesi_tecs17}, we condense these reachable regions into \emph{QoC degradation curves} that quantify the interplay between computational (and bandwidth) requirements imposed by security services and the QoC-guarantees under attack.
However, the use of such policies introduces new challenges for ensuring timeliness of deployed control functionalities, as the standard periodical task and message models under such relaxed integrity enforcement policies feature significant execution and transmission time variations.
In~\cite{lesi_tecs17}, we only focus on the computational aspect of the problem and show how to guarantee timeliness for security-aware control tasks, while~\cite{lesi_rtss17} presents our initial attempt to ensure timeliness of communication messages. Yet, both works consider decoupled scenarios where either ECU processing time is the only concern with the assumption that the network is not congested, or where network bandwidth is the only limitation for incorporating security while ECUs are not considered. 
However, the problem of providing integrated QoC and security guarantees while ensuring timeliness in scenarios where both ECU processing time and network bandwidth are limited remains open. Moreover,
~\cite{jovanov_arxiv17} shows that block-authentication of sensor measurements has to be used for general types of dynamics of controlled physical processes,  which results in workloads that cannot be modeled within the existing framework from~\cite{lesi_tecs17}.


Consequently, in this work we introduce a design-time methodology, illustrated in \figref{methodology}, that ensures that existing control functionalities will not be negatively affected by adding message authentication to enforce data integrity. Specifically, the methodology provides sensing-to-actuation timeliness guarantees for security-aware control that employs intermittent message authentication in order to guarantee that a desired QoC level is maintained even under attack.
To capture the cases where block-authentication is needed while further reducing bandwidth requirements for the QoC guarantees under attack, we propose the use of intermittent cumulative authentication policies.
We specifically address modeling as well as capture schedulability conditions for security-aware sensing tasks (which are preemptive) that perform cumulative 
authentication, and security-aware messages (which are non-preemptive) that support arbitrary offsets; we show in Section~\ref{sec:schedulabilityAnalysis} that existing conditions do not support general offsets, which limits the use of our preliminary approach from~\cite{lesi_rtss17}. To further utilize resources available at runtime, we show how by opportunistically authenticating additional sensor measurements when computation time/bandwidth is available, we can further enhance QoC guarantees under attack. Finally, we show applicability of our approach on both synthetic systems that are designed according to established guidelines for automotive benchmarks, as well as an automotive case study.

\begin{figure}[!t]
  \centering
  \includegraphics[width=0.64\linewidth]{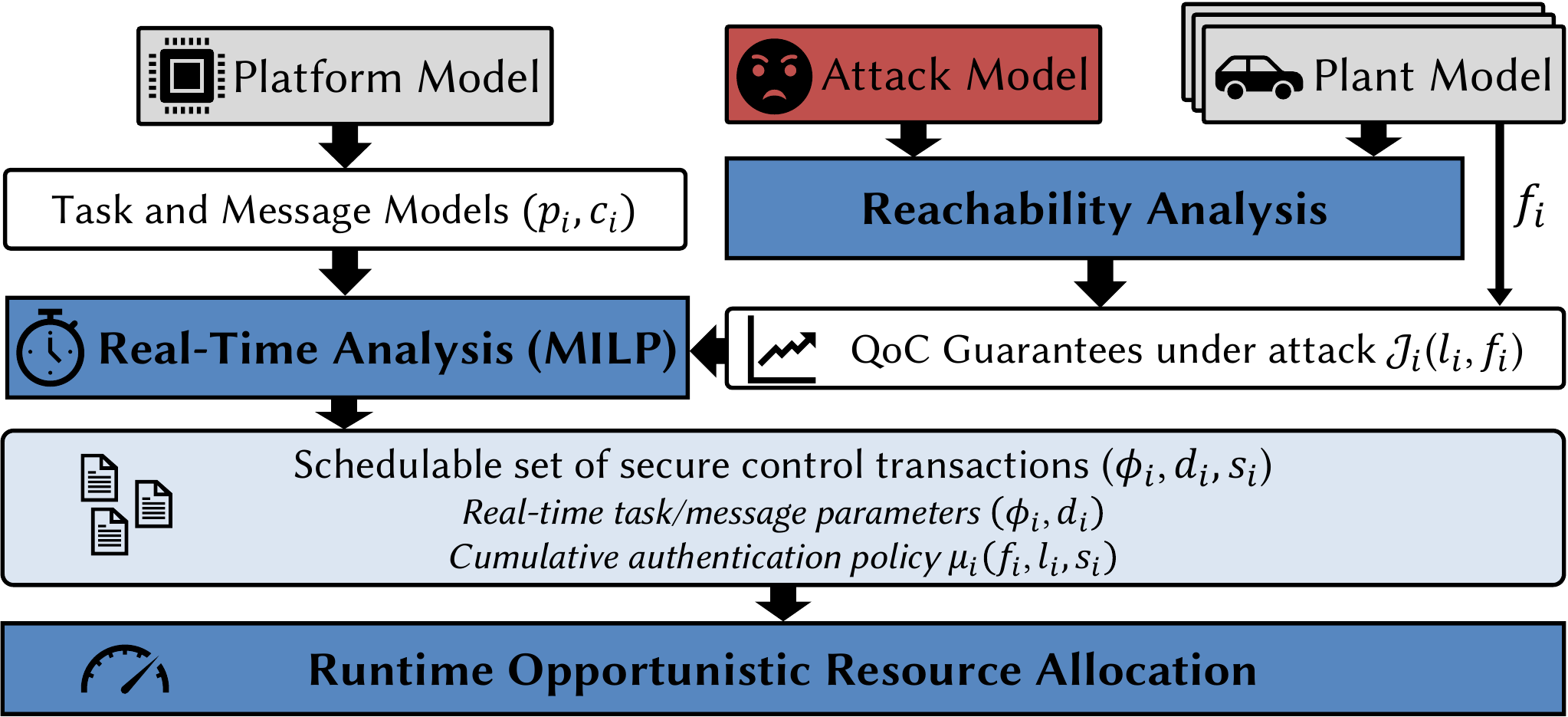}
    \vspace{-4pt}
  \caption{Design-time methodology to integrate security in resource-constrained CPS; the use of cumulative intermittent message authentication policies enables tradeoffs between (i)~required system resources (i.e., to ensure that all control functionalities still perform within specifications even after security is `added' to the system),  and (ii) Quality-of-Control (QoC) guarantees in the presence of network-based false-data injection attacks on sensor measurements delivered to controllers.}
    \vspace{-2pt}
\label{fig:methodology}
\end{figure}

This paper is organized as follows. In Section~\ref{sec:motivation}, we 
present the system and attack models, before introducing intermittent authentication policies for secure control of CPS (Section~\ref{sec:intermittent}), and formalizing the end-to-end transaction modeling for secure control (Section~\ref{sec:modeling}). Schedulability analysis pertaining to the models is presented in Section~\ref{sec:schedulabilityAnalysis}, while Section~\ref{sec:MILP} transforms the corresponding parameter synthesis problem into a mixed integer linear program (MILP). Opportunistic use of remaining resources to improve the overall QoC guarantees in the presence of attacks is presented in Section~\ref{sec:opportunistic}, before evaluating our approach in Section~\ref{sec:evaluation}. Finally, Section~\ref{sec:relatedWork} presents related work before concluding remarks are provided in Section~\ref{sec:conclusion}.

\section{System and Attack Model}
\label{sec:motivation}

In this section, we present system architecture and model, including the attack model, and introduce cumulative authentication policies that ensure the desired QoC levels in the presence of attacks. We then formalize the problem of adding security guarantees against MitM attacks and outline our design-time methodology (shown in \figref{methodology}) to integrate security in resource-constrained CPS.

\begin{figure}[!t]
  \centering
  \includegraphics[width=.84\linewidth]{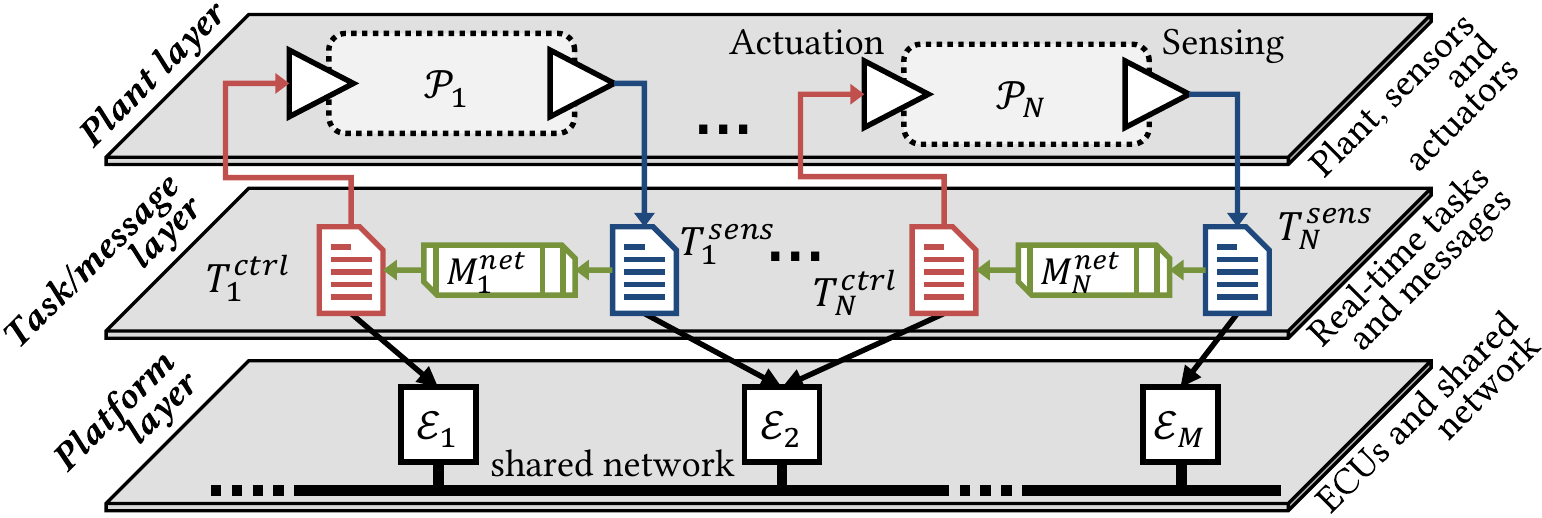}
  \caption{System architecture with $N$ physical plants ($\mathcal{P}_1,...,\mathcal{P}_N$) that are sampled and controlled in real-time by M ECUs ($\varepsilon_1,...,\varepsilon_M$); the ECUs communicate with the corresponding plants' sensors and actuators over a real-time communication network. We assume that the mapping of controllers for each plant $\mathcal{P}_i$ to a specific ECU $\varepsilon_j$ is already performed.}
\label{fig:stdArch}
\end{figure}

\subsection{System Architecture and Model without Attacks}
%

We consider a common CPS architecture shown in \figref{stdArch}, where sensors for $N$ physical plants $\mathcal{P}_i$ ($i=1,..,N$), as illustrated in the \emph{plant layer} in the figure, communicate with plant controllers over a shared real-time network.
We assume that each plant $\mathcal{P}_i$ can be modeled in the standard linear systems form as
\begin{equation}
\begin{split}
	\mathbf{x}_i[{k+1}] &= \mathbf{A}_i\mathbf{x}_i[k] + \mathbf{B}_i\mathbf{u}_i[k] + \mathbf{w}_i[k] \\
	\mathbf{y}_i[k] &= \mathbf{C}_i\mathbf{x}_i[k] + \mathbf{v}_i[k],
\end{split}
\label{eq:system}
\end{equation}
where $\mathbf{x}_i[k], \mathbf{y}_i[k]$ and $\mathbf{u}_i[k]$ denote the plant's state, output and input vectors at time $k$, while  $ \mathbf{w}_i[k]$ and $\mathbf{v}_i[k]$ are process and measurement noise.
In addition, each plant $\mathcal{P}_i$ is controlled by a feedback controller that in the most general form can be captured as
\begin{equation*}
\begin{split}
\hat{\mathbf{x}}_i[k+1] &= \mathbf{f}_i\left(\hat{\mathbf{x}}_i[k],\hat{\mathbf{y}}_i[k]\right) \\
\mathbf{u}_i[k] &= \mathbf{g}_i\left(\hat{\mathbf{x}}_i[k],\hat{\mathbf{y}}_i[k]\right).
\end{split}
\end{equation*}	
Here, $\mathbf{f}_i(\cdot)$ and $\mathbf{g}_i(\cdot)$ denote arbitrary linear mappings, which may for example describe an observer-based state feedback controller illustrated in \figref{controllerArch}. In addition, $\hat{\mathbf{x}}_i[k]$  and $\hat{\mathbf{y}}_i[k]$ denote the estimate of the plant's state and sensor measurements received by the controller at time $k$. Also, as shown in \figref{controllerArch}, we assume that each controller is equipped with a physics-based intrusion/anomaly detector that employs the plant model and a window of previous control inputs ($\mathbf{u}_i[k]$), state estimates ($\hat{\mathbf{x}}_i[k]$)  and received sensor measurements ($\hat{\mathbf{y}}_i[k]$) to trigger alarms (e.g.,~as in~\cite{pajic_tcns17,mo2010false, kwon2014stealthy,jovanov_cdc17,miao_tcns17}).

\subsubsection{Task and Message Models}
For each plant $\mathcal{P}_i$, measurement acquisition, packing and transmission is done by a periodic \emph{sensing} (or \emph{transmitting}) task denoted by
$T^{sens}_i$. In addition, periodic \emph{control} (or \emph{receiving}) task $T^{ctrl}_i$, which may be executed on a different ECU, unpacks received measurements before using them for control updates in each sampling (i.e., actuation) period. 
Hence, the periods of these tasks are equal to the sampling period of the controlled plant -- i.e.,~$p_i^{sens}=p_i^{ctrl}=p_i$. We also assume that mapping of tasks onto ECUs has already occured, as shown in \figref{stdArch} -- i.e., the set $\mathcal{T}_{\mathcal{E}_j}, j=1,...,M$, of
tasks executing on each of the $M$ ECUs $\mathcal{E}_1, ...\mathcal{E}_M$ is known; for example, in the \emph{platform layer} in~\figref{stdArch}, the task set $\mathcal{T}_{\mathcal{E}_2}$ that contains $T^{sens}_1$ and $T^{ctrl}_N$ is mapped onto ECU $\mathcal{E}_2$.
Thus, we assume that the worst-case execution times (WCET) for all these tasks are known, and let $c_i^{ctrl}$ and $c_i^{sens}$ denote the WCET
on the assigned ECUs, for tasks $T_i^{ctrl}$ and $T^{sens}_i$, ($i=1,...,N$).

Each sensing task $T^{sens}_i$ communicates sensor measurements to control task $T^{ctrl}_i$ through a real-time message $M_i^{net}$ with the same period $p_i$ and the worst-case transmission time $c_i^{net}$, as illustrated in the task/message layer in \figref{stdArch}. Note that when no confusion arises, we refer to all $T_i^{sens}$, $M_i^{net}$, and $T_i^{ctrl}$ as tasks. Finally, without loss of generality, we assume that actuation is done directly by control tasks, i.e., actuation commands are not transmitted as messages over the network, although the presented model can be easily generalized to cover this case.

\begin{figure}[!t]
\centering
\begin{minipage}[t]{.44\textwidth}
  \centering
  \includegraphics[width=\textwidth]{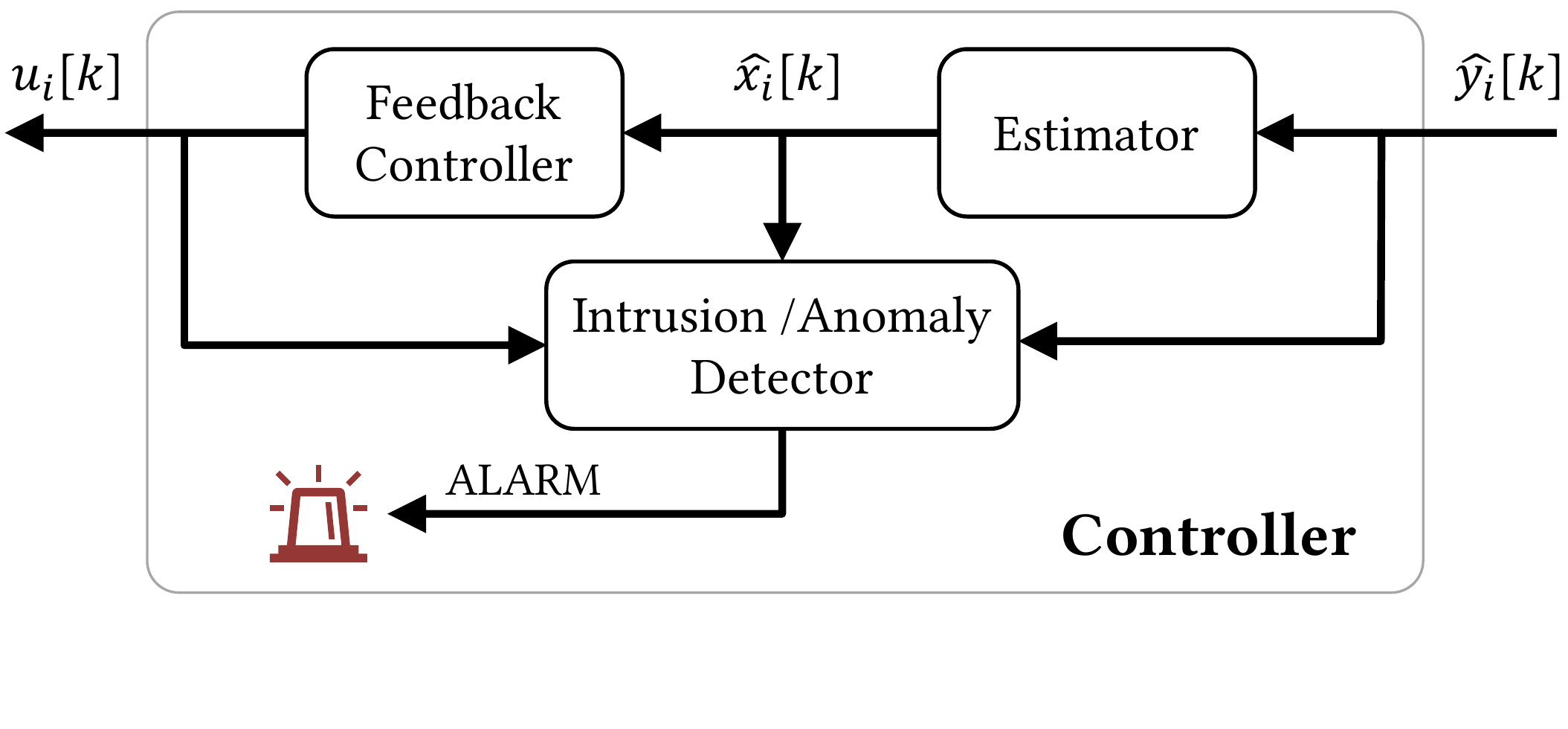}
  \captionof{figure}{General controller design. In addition to a standard estimator (i.e., observer) and a feedback controller, the controller employs a physics-based intrusion/anomaly detector.}
\label{fig:controllerArch}
\end{minipage}
\hspace{8pt}
\begin{minipage}[t]{.5\textwidth}
  \centering
  \includegraphics[width=\linewidth]{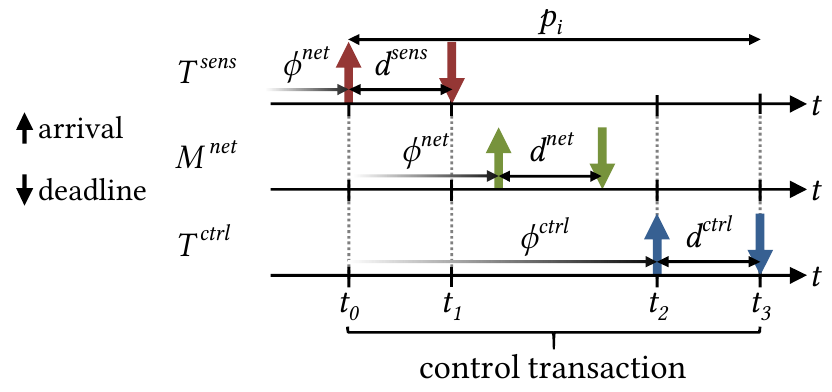}
  \captionof{figure}{Timing diagram of a control transaction --- the precedence requirements for sensing (transmitting) task $T_i^{sens}$, message $M_i^{net}$ and control (receiving) task $T^{ctrl}_i$ are captured by constraints~\eqref{eq:c1}-\eqref{eq:c3}.}
\label{fig:endToEndTiming}
\end{minipage}%
\end{figure}

\paragraph{Control Transactions}\label{sec:CT}
For any plant~$\mathcal{P}_i$, we define a \emph{control transaction} $\mathcal{T}_i$ 
as the chain of 
invocations of
$T_i^{sens}$, $M_i^{net}$ and $T_i^{ctrl}$ with all
%
%
the tasks 
being precedence-constrained. 
Specifically, the earliest time a job of 
task $T_i^{ctrl}$ may start execution is upon receiving the required sensor message. Similarly, network access for message $M_i^{net}$ cannot be requested before task $T_i^{sens}$ has prepared data for transmission. We capture these precedence constraints with non-zero offsets and constrained deadlines imposed on the tasks (\figref{endToEndTiming}); we model the tasks in the standard $(WCET, period, offset, deadline)$ format as $T_i^{ctrl}(c_i^{ctrl}, p_i,\phi_i^{ctrl},d_i^{ctrl})$, $M_i^{net}(c_i^{net}, p_i,\phi_i^{net},d_i^{net})$, and $T_i^{sens}(c_i^{sens}, p_i,\phi_i^{sens},d_i^{sens})$, with the precedence constraints specified as
\begin{align}
\label{eq:c1}
\phi_i^{net}&\geq \phi_i^{sens} + d_i^{sens},\\
\label{eq:c2}
\phi_i^{ctrl}&\geq \phi_i^{net}+d_i^{net},\\
\label{eq:c3}
\phi_i^{ctrl}+d_i^{ctrl}&\leq p_i,
\end{align}
and illustrated in \figref{endToEndTiming}.
To simplify our notation, constraint~\eqref{eq:c3} employs a standard assumption (e.g.,~as in~\cite{relaxingPeriodicityCAN}) that the delay  between sampling and actuation for each plant $\mathcal{P}_i$ is bounded by the control period~$p_i$; however, these constraints can be easily adjusted for any fixed sampling-to-actuation delay bounds that may be considered. 

Finally, it is important to highlight that the period $p_i$ and WCET $c_i^{sens}$, $c_i^{net}$, and $c_i^{ctrl}$ are~known~and considered inputs to our design-time procedure, as we do not want to significantly affect the initial (i.e., non-secured) control deployment. On the other hand, to enforce the tasks' precedence, each control transaction imposes the aforementioned constraints between the offsets and deadlines used to model the transaction tasks. Yet, the actual values are \textbf{not} assigned a priori, i.e.,~the transaction set is considered incomplete, and our goal is to determine offsets and deadlines for all tasks  that produce a schedulable set of control transactions even when security mechanisms are incorporated.

\subsection{Attack Model}

The considered system architecture is susceptible to network-based attacks, such as MitM attacks, on communication between sensors and controllers. 
The attacker can use actions of the controller to force the plant away from the desired state by injecting false data that differ from actual sensor measurements, consequently affecting the controller's estimation and thus the applied control inputs. To formally capture this, we use the standard attack model from~\cite{ncs_attack_models,pajic_csm17,pajic_tcns17, mo2010false,fawzi_tac14}, where additional term $\mathbf{a}_i[k]$ captures the vector of values injected by the attacker at time $k$ on compromised measurements -- i.e., with MitM attacks,  measurements received by the controller  $\hat{\mathbf{y}}_i[k]$ may differ from the actual sensor measurements $ {\mathbf{y}}_i[k]$. Specifically,
\begin{equation}
\label{eq:attModel}
\hat{\mathbf{y}}_i[k] = \begin{cases}
	\mathbf{y}_i[k], & \text{without MitM attack}\\
	\mathbf{y}_i[k] + \mathbf{a}_i[k], & \text{with MitM attack}
\end{cases}
\end{equation}

Due to attacks, the system evolution would not occur according to the model from \eqref{eq:system}. Therefore, we differentiate system evolutions with and without attacks by adding superscript $a$ to all variables affected by the attacker's influence. For example, we denote the plant's state and outputs when the system is under attack as $\mathbf{x}_i^a[k]$ and $\mathbf{y}_i^a[k]$, respectively. Hence, in the case of attacks, sensor measurements delivered to the controller can be modeled as
\begin{equation}
	\hat{\mathbf{y}}_i^a[k]=\mathbf{y}_i^a[k] + \mathbf{a}_i[k]= \mathbf{C}_i\mathbf{x}^a_i[{k}]+\mathbf{v}^a_i[{k}]+ \mathbf{a}_i[k],
\label{eq:systemA}
\end{equation}

The attack vector $\mathbf{a}_i[k]$ is unknown and can have any value assigned by the attacker. The only constraint is that it may be sparse, depending on the set of compromised information flows from sensors to the controller; specifically, if communication from a sensor to the controller for plant $\mathcal{P}_i$ is not corrupted then the corresponding value in $\mathbf{a}_i[k]$ has to be equal to zero. Any assumptions about the set of compromised sensor flows (e.g.,~the number of the flows) can thus be captured by introducing constraints on the sparsity of the vector. However, unless stated otherwise, to simplify our presentation we focus on the worst-case scenario, where the attacker is able to compromise all sensor flows for the plant, once he/she decides to launch an~attack.

%
%


 With the use of standard  cryptographic mechanisms, such as MACs, integrity of the received sensor data can be guaranteed,
 as we assume that the attacker does not have access to the shared secret keys used to generate the MACs.
 In addition, we assume that one of the attacker's goals is \emph{to remain stealthy}, and thus in time steps when message authentication is used, the attacker cannot inject false data (i.e.,~$\mathbf{a}_i[k]=\mathbf{0}$) or the attack will be detected.\footnote{Note that the attacker, with access to the network, could launch Denial-of-Service attacks that prevent messages, including authenticated ones, from being successfully delivered to the controller. In this work, we do not consider such attacks since they are in general easier to detect in CPS with reliable communication networks.}
Furthermore, we assume that the attacker has unlimited computation power and full knowledge of the system, system architecture and plant models, as well as the time-points when authentication will be utilized. This allows him to plan ahead, and smartly craft false measurements to be injected over the network, such that they do not trigger the deployed detector, while deceiving the controller into pushing the plant away from the desired operating~point.\footnote{Examples of such attacks can be found in~\cite{mo2010false, kwon2014stealthy,jovanov_cdc17,jovanov_arxiv17}.}

Consequently, the attacker's goal is to maximally reduce control performance (i.e.,~QoC) while remaining stealthy -- i.e.,~undetected by the system. Therefore, in addition to not inserting false data packets in time-frames when data authentication is enforced, the injected falsified sensor measurements should not trigger the anomaly/intrusion detection system employed at the~controller.

\section{Defending against Attacks with Intermittent Data Authentication}
\label{sec:intermittent}
Enforcing data integrity for every communicated measurement packet may be infeasible due to additional computation costs associated with signing and verifying 
MACs, as well as additional bandwidth required to transmit them.
%
%
%
%
%
%
%
For example, consider three sensing tasks that are being executed on the same ECU, $\{T^{sens}_1(2,10),T^{sens}_2(2,10),T^{sens}_3(5,20)\}$,\footnote{To simplify our notation, when a task $T$ is represented as $T(c, p)$ it is assumed that its offset is equal to zero and relative deadline is equal to the period $p$.}
and let us assume that the security-induced computation overhead to sign measurements with a MAC 
is $2$ time units. 
As shown in \figref{motivationFig}(left), the new task set $\{T^{sens}_1(4,10),T^{sens}_2(4,10),T^{sens}_3(7,20)\}$ is infeasible; thus, even if the network can deal with the additional communication overhead, the 
transmitting ECU cannot authenticate (i.e., sign) every~message.

\begin{figure*}[!t]
    \centering
    \includegraphics[width=\linewidth]{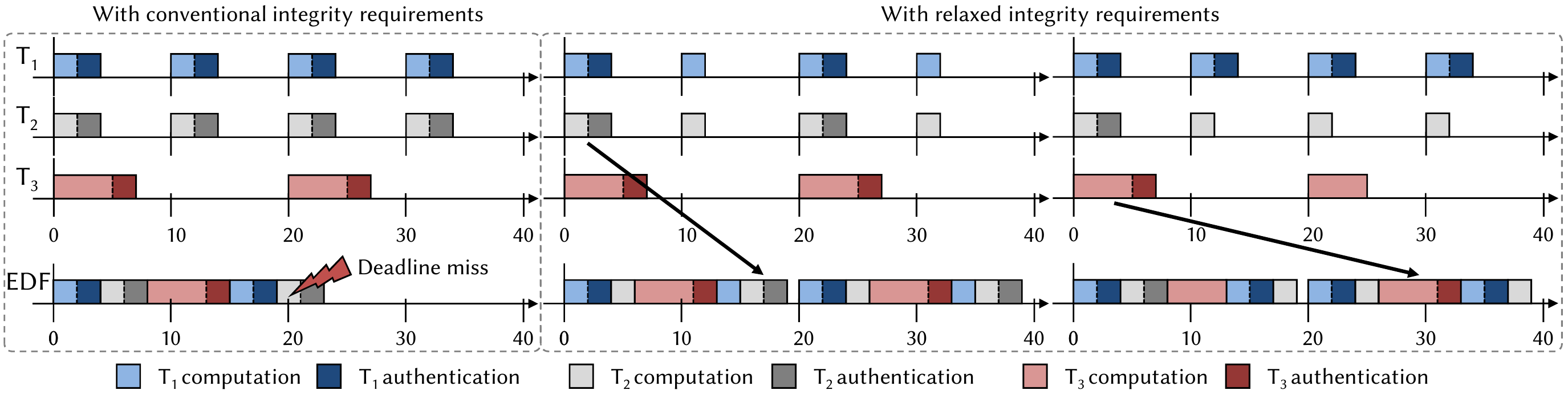}
    \caption{Task set $T^{sens}_1(2,10)$, $T^{sens}_2(2,10)$, $T^{sens}_3(5,20)$ is infeasible if overhead of signing sensor measurements is $2$ time units in every sampling period (left). However, if $T^{sens}_1$ and $T^{sens}_2$ are allowed to authenticate every other period, and the initial authentication of $T^{sens}_2$ is deferred until the second period, the task set is schedulable (center).
    On the other hand, if the goal is to maximize QoC guarantees for the first plant 
    by always authenticating $T^{sens}_1$ 
    measurements, authentication rates for
     $T^{sens}_2$ and $T^{sens}_3$ can be reduced by authenticating every fourth and every other period, respectively, while still providing suitable QoC guarantees under attack,
using the QoC degradation curves to guide formal tradeoff analysis (right).}
    \label{fig:motivationFig}
\end{figure*}

On the one hand, a stealthy attack may significantly reduce QoC if the attacker has compromised a certain number of sensor flows (e.g.,~\cite{pajic_tcns17,kwon2014stealthy}). For any specific class of controllers from~\figref{controllerArch}, by injecting false sensor data that result in a skewed state estimation, the attacker deceives the controller to apply inappropriate control inputs that steer the plant away from the operating point.
On the other hand,  in~\cite{jovanov_cdc17,jovanov_arxiv17,jovanov_cdc18}, we show how physical properties of a system can be exploited to relax integrity requirements for secure control of CPS. The idea is that the state estimation errors due to attacks have to increase slowly to avoid attack detection by the deployed physics-based detector from \figref{controllerArch}. In addition, since each plant has its own dominant time-constant, which can be obtained by the plant model $\mathcal{P}_i$, in the presence of a stealthy attack, QoC can be significantly degraded only after some time has elapsed after the attack is launched.

QoC degradation under attack occurs due to errors in state estimation caused by the false-data injected at time-points when authentication is not used. Hence, for any data authentication policy, which can be captured as time-points where MACs are used (i.e.,~times $k$ where $\mathbf{a}_i[k]=\mathbf{0}$), system performance under stealthy attacks can be evaluated by computing reachable regions of the state estimation error caused by the false data. Specifically, due to stealthy false-data injection attacks, the reachable regions $\mathcal{R}[k]$ and $\mathcal{R}$ of the state estimation error can be defined as~\cite{jovanov_cdc17,jovanov_arxiv17,jovanov_cdc18}
\begin{equation*}
\label{eqn:Rk}
\mathcal{R}[k]=\left\{ \begin{array}{c|c} \mathbf{e} \in\mathbb{R}^n &  \left.\begin{array}{c} \mathbf{e}\mathbf{e}^\intercal\preccurlyeq E[\mathbf{e}^a[k]]E[\mathbf{e}^a[k]]^\intercal + \gamma Cov(\mathbf{e}_k^a),
\\~\mathbf{e}^a[k]=\mathbf{e}_k^a(\mathbf{a}_{1..k}),~\mathbf{a}_{1..k}\in\mathcal{A}_k
\end{array}\right.\end{array}\right\} \quad\text{and}\quad \mathcal{R}= \bigcup_{k=0}^\infty \mathcal{R}[k].
\end{equation*}
Here, $\mathcal{R}$ is the global reachable region of the state estimation error, while $\mathcal{A}_k$ denotes the set of all stealthy attacks $\mathbf{a}_{1..k} = \left[\mathbf{a}[1]^\intercal ... \mathbf{a}[k]^\intercal\right]^\intercal$, and $\mathbf{e}_k^a(\mathbf{a}_{1..k})$ is the estimation error evolution due to the attack $\mathbf{a}_{1..k}$.
Note that this general definition allows for the inclusion of additional information, such as the number and location of compromised sensors. Unless otherwise stated, we assume that measurements from all sensors are compromised when authentication is not used. For instance, \figref{sampleRegions} shows the reachable regions of state estimation error due to stealthy attacks over the adaptive cruise control system described in~Sec.~\ref{sec:caseStudy} for the case with and without intermittent authentication.

\begin{figure*}[!t]
    \centering
    \includegraphics[width=\linewidth]{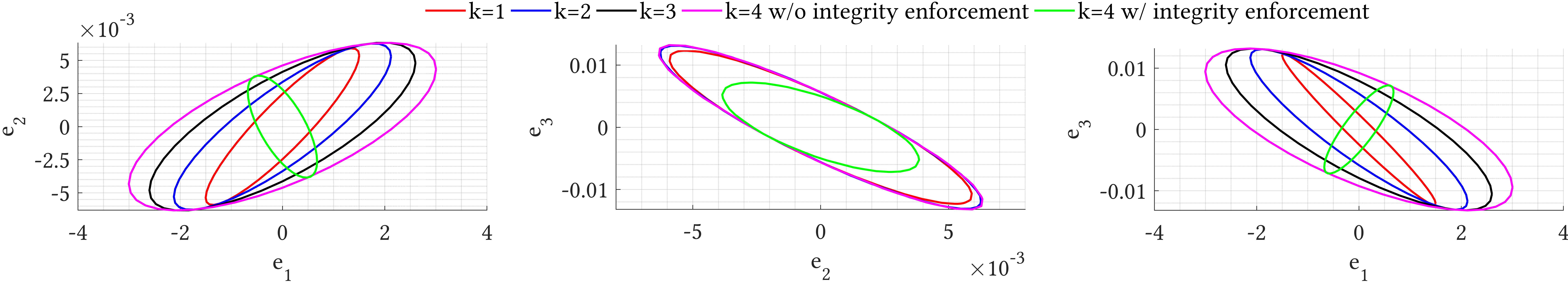}
    \caption{State estimation error evolution due to stealthy attack on distance sensing in an adaptive cruise control system -- projections of the reachable regions in the 3-dimensional state space (distance-speed-acceleration) are shown. Note that the attainable state estimation error is significantly reduced (but not zero) if integrity is enforced over every $4^{th}$ measurement, while the regions grow infinitely without any integrity enforcement.}
    \label{fig:sampleRegions}
\end{figure*}

In \cite{lesi_tecs17}, we introduced 
a \emph{QoC degradation curve} $\mathcal{J}_i(l)$ that, for any linear plant $\mathcal{P}_i$, directly quantifies the dependency between the security-induced computation and bandwidth overhead and the control performance (QoC) under attack, which is reduced due to the estimation errors. Specifically,  $\mathcal{J}_i(l)$ can be used to bound QoC degradation as a function of $l$ -- the maximal time between consecutive uses of MACs in data authentication~policies. This can be formally captured as
\begin{equation*}
\label{eqn:R}
\mathcal{J}_i(l)= supp\{\|\mathbf{e}^a\|_2 ~|~ \mathbf{e}^a\in\mathcal{R}_i^{l}\},~~~~\text{where }~~~~
\mathcal{R}_i^{l} = \cup_{k=0}^\infty \mathcal{R}_i^{l}[k],
\end{equation*}
where $\mathcal{R}_i^{l}[k]$ denotes the reachable region $\mathcal{R}_i[k]$ computed for all data authentication policies with inter-authentication distance of $l$.
%
Such QoC-degradation curves enable the designer to accurately adjust the system's working point by balancing between computational or network resource allocated for security and the returning QoC guarantees under attack, as the predefined QoC requirement can be directly mapped into security-induced overhead and vice versa. 

To illustrate this, let us revisit the example from \figref{motivationFig}, 
and let us assume that for the first two plants, authenticating sensor measurements in every other sampling period ensures the desired QoC level in the presence of attack. \figref{motivationFig}(center) shows that under such conditions, by deferring the initial authentication of $T^{sens}_2$ until the second period, the task set becomes feasible.
Note that, however,  if every fourth measurement for the second plant and every other measurement of the third plant are authenticated, then the measurements for the first plant can continuously be authenticated, as in \figref{motivationFig}(right). QoC degradation curves explicitly capture the dependency between required security-related overhead and control performance, and can be used to decide which scenario is more desired with respect to the overall (for all plants) QoC guarantees. 

\subsection{Cumulative Data Authentication Policies}
In general, depending on the considered plant's dynamics (i.e., matrices $\mathbf{A}_i,\mathbf{B}_i,\mathbf{C}_i$ in~\eqref{eq:system}) it may not be sufficient to intermittently authenticate sensor measurements at one time point. Rather integrity of $f_i$ consecutive measurements should be ensured, with these time-windows 
appearing intermittently during system execution~\cite{jovanov_arxiv17}.\footnote{As shown in~\cite{jovanov_arxiv17}, 
$f=\min(\psi,q^{un}_i)$ with $\psi$ being the observability index of the $(\mathbf{A}_i,\mathbf{C}_i)$ pair and $q^{un}_i$ is the number of unstable eigenvalues of $\mathbf{A}_i$. } 
Implementing such data authentication policies with the use of standard MACs, where every authenticated message is signed with its own MAC added to the message, would require that $f_i$ consecutive communication packets are extended to accommodate MACs.
As the network is commonly a bottleneck in resource-constrained CPS, in this work we propose the use of  \emph{{cumulative} message authentication} where a MAC is computed over several consecutive plant measurements, 
before being attached to the final message from the block; this
significantly reduces the network load by transmitting a MAC for multiple consecutive data points as part of a single message~\cite{streamAuthentication1,nilsson2008efficient}. 
%

Therefore, we introduce the following definitions for cumulative data authentication policies that intermittently or periodically authenticate blocks of messages with sensor measurements.

%



\begin{definition}\label{def:intpolicy}	
An intermittent  cumulative  data authentication policy $\mu_i=(\left\{t_{j}\right\}_{j=0}^\infty, f_i, l_i)$, with 
$t_{{j-1}}<t_{j}$ and $l_i=\sup_{j>0} \left(t_{j} - t_{{j-1}}\right)$, ensures that
$\mathbf{a}_{t_j} = \mathbf{a}_{t_j+1} =...=\mathbf{a}_{t_j+f_i-1} =\mathbf{0}$,  for all $j\geq 0$.
\end{definition}

\begin{definition}\label{def:perpolicy}
	A periodic cumulative data authentication policy $\mu_i(s_i,f_i,l_i)$, 
	where $0\leq s_i\leq l_i-1$,
	ensures that for all $j \geq 0$, 
	$$\mathbf{a}_i[s_i+l_i\cdot j] = \mathbf{a}_i[s_i+1+l_i\cdot j] = ... =\mathbf{a}_i[s_i+f_i-1 + l_i\cdot j] =\mathbf{0}.$$
\end{definition}

\noindent Definition~\ref{def:intpolicy} imposes a maximum time of $l_ip_i$ (i.e., $l_i$ control periods) between the initial authenticated measurements within blocks of $f_i$ consecutive authenticated measurements. On the other hand,~with periodic cumulative authentication policies from Definition~\ref{def:perpolicy}, the time between initial authentications for consecutive blocks is always exactly~$l_ip_i$, and authentication blocks start with the initial offset equal to~$s_ip_i$. 

A control transaction with an intermittent or periodic cumulative authentication policy applied to its tasks (resulting in security-related overheads) is referred to as a {\emph{secure control transaction}}.
%
%
%
For example, consider a secure transaction $\mathcal{T}_i$ from \figref{modelExample}, where a periodic cumulative data authentication policy  $\mu_i(1,2,4)$ is implemented using cumulative MACs. During every four periods, overhead due to MAC signing for sensing task $T_i^{sens}$ is spread over $f_i=2$ jobs, while only one message $M_i^{net}$ and job of $T_i^{ctrl}$ include overhead due to authentication, and only after the last message from the authenticated block is prepared for transmission by~$T_i^{sens}$.

Finally, the use of cumulative authentication introduces delay in verifying data integrity that has to be taken into account when QoC degradation curves are derived. Therefore, in this case QoC degradation curves can be captured as $\mathcal{J}_i(l_i,f_i)$, which are computed from the plant model $\mathcal{P}_i$ using the reachability analysis we introduced in~\cite{jovanov_cdc18}, as illustrated in the upper-right part of  \figref{methodology}.
Since the reachability analysis considers intermittent cumulative authentication policies from Definition~\ref{def:intpolicy}, when used for periodic policies $\mu_i(s_i,f_i,l_i)$, as defined in Definition~\ref{def:perpolicy},  it provides QoC guarantees \textbf{for any value} of $s_i$. For example, the QoC-degradation curves for adaptive cruise control, driveline management and lane keeping controllers, as functions of inter-authentication distance ($l_i$) and authentication block length ($f_i$), are shown in \figref{AllQocCurves}. 
Note that the  adaptive cruise control system requires that at least two consecutive measurements are authenticated (i.e., $f_{ACC}\geq 2$) due to the properties of
the plant's dynamics.

These QoC-degradation  functions $\mathcal{J}_i(l_i,f_i)$ provide the basis for our analysis of tradeoffs between QoC guarantees under attack and the required computational and network resources used for data authentication (i.e., security-related overhead). For each plant $\mathcal{P}_i$, the function $\mathcal{J}_i(l_i,f_i)$ is a non-decreasing function in variable $l_i$. In addition, the minimal required value for $f_i$ can be directly computed from the model of $\mathcal{P}_i$ without significant QoC improvements being obtained by increasing $f_i$. Therefore, the desired QoC requirements (e.g.,~a bound on $\mathcal{J}_i(l_i,f_i)$) can be directly mapped into constraints on the value of $l_i$, the number of non-authenticated communication packets between consecutive block authentications.

\subsubsection{Overview of our Approach}
Our goal is to ensure the desired level of QoC for all controlled plants in resource-constrained CPS, even in the presence of network-based attacks.
As resource constraints prevent continuous authentication of transmitted sensor measurements, we focus on \emph{periodic} cumulative authentication policies, as for such block integrity enforcements are maximally spread apart.
To achieve this, we propose the use of the design-time framework from \figref{methodology}, that directly facilitates tradeoff analysis between the QoC guarantees under attack and security (i.e.,~authentication) overhead for ensuring intermittent integrity of sensor measurements.
For each plant $\mathcal{P}_i$, $i=1,...,N$, the plant model and corresponding QoC curve $\mathcal{J}_i(l_i,f_i)$ are  used to obtain 
constraints on employed periodic cumulative authentication policies; specifically the values for $l_i$ and $f_i$ (but not $s_i$) that result in the desired QoC. In addition, from the platform model and the initial controller specification, regular (i.e., without overheads) and extended (i.e.,  including authentication) WCETs can be obtained, along with the control transaction period $p_i$.

On the other hand, for the task models to be complete and the intermittent authentication policies to be fully defined, it is necessary to derive
feasible (i.e.,~schedulable) tasks' offsets and deadlines, as well as initial authentication offsets ($s_i$) for the cumulative authentication policies.
Consequently, to allow for the execution of secure control transactions with the desired levels of QoC in the presence of attacks, in the rest of the paper we focus on
%
the following scheduling~problems.
\begin{problem}
\label{schedulingProblem}
For a set of secure control transactions 
$\mathcal{T}=\left\{\mathcal{T}_1,...,\mathcal{T}_N\right\}$,
complete the respective task/message sets and deployed periodic cumulative authentication policies, 
such that the obtained secure transaction set $\mathcal{T}$, mapped to available ECUs $\mathcal{E}_1, ...\mathcal{E}_M$, is schedulable under preemptive EDF for ECUs and non-preemptive EDF for the network. 
\end{problem}

\begin{problem}
\label{optimalProblem}

Starting from a schedulable set of secure control transactions
$\mathcal{T}$, obtained from Problem~1, improve the overall QoC guarantees by utilizing remaining resources (ECU time, network bandwidth) with the use of intermittent  cumulative  data authentication~policies.
\end{problem}


%

We consider the use of the EDF scheduler uniformly across ECUs and the network, since EDF is optimal non-idle scheduler for preemptive task scheduling (i.e., on ECUs), while it outperforms rate-monotonic schedulers for realistic loads on non-preemptive networks such as CAN~\cite{relaxingPeriodicityCAN,zuberiShin}. The main challenge in determining unknown parameters (task offsets, deadlines and  extended frame start times) is capturing schedulability conditions for preemptive-EDF on each of the ECUs, as well as non-preemptive-EDF for the shared network. Therefore, in next section, we start by examining the mapping of the control- and security-related platform requirements into a security-aware control transaction model, which will provide a basis for our schedulability analysis and parameter synthesis procedure.

\begin{remark}[Reduction of Control Rate vs. Reduction of Authentication Rate]
The main idea behind 
this work is that with the simultaneous use of physics-based attack detection and cyber-based security mechanisms, such as message authentication, we will be able to provide strong QoC performance guarantees even in resource-constrained CPS, in which it is not possible to protect integrity of every transmitted sensor measurement. An alternative approach to the use of intermittent authentication would be to reduce the control rate to the levels that ensure that every transmitted sensor message can be authenticated. For instance, for our running example from~\figref{motivationFig}, if control task rates are set to $20$, $20$, and $40$ time units respectively (instead of $10$, $10$, and $20$), MACs can protect integrity of every  sensor measurement transmitted over the network. 
However, reducing the control rates (i.e., by increasing control task/sampling periods) results in a reduced control performance in the case without attacks, compared to the initial system that employs the nominal control periods. On the other hand, our goal is to add protection against network-based attacks with strong QoC guarantees in the presence of attacks, without negative effects on control performance (i.e., QoC) when the system is not under attack. With the use of intermittent authentication policies this can be achieved by ensuring schedulability of the main control functionalities (tasks) at the nominal (i.e., initial) periods/rates even when the authentication mechanisms are only intermittently utilized.
\end{remark}

\begin{figure}[!t]
  \centering
  \includegraphics[width=.668\linewidth]{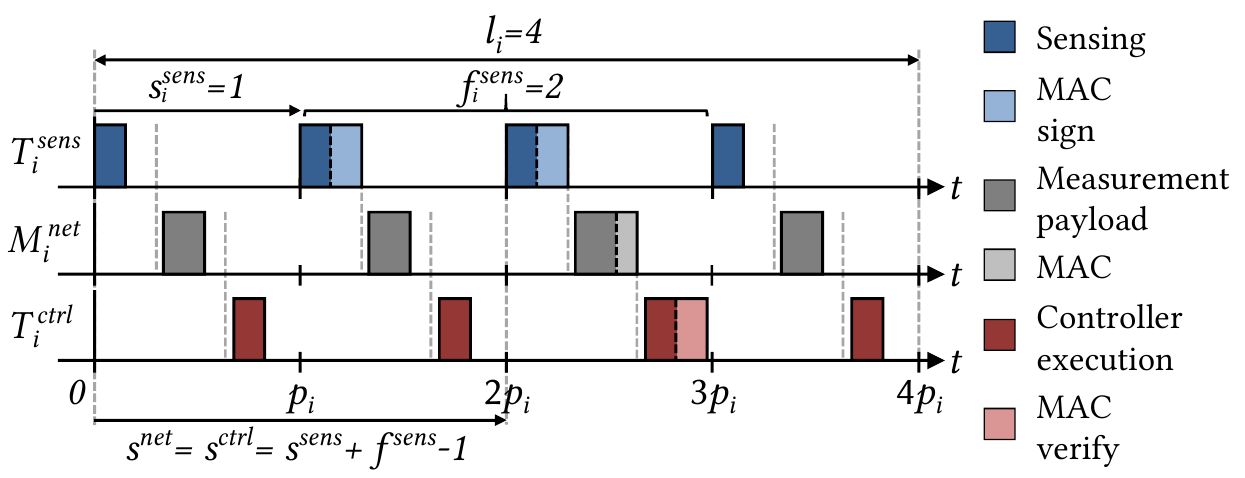}
  \caption{Example of a secure control transaction 
  when the periodic cumulative data authentication policy  $\mu_i(1,2,4)$ is used. 
   Note that only the transmitting task is extended for $f_i$ consecutive invocations to perform cumulative authentication. On the other hand, the network message is extended only once, and accordingly the receiving task performs authentication (i.e., verifies the received MAC) once after receiving that measurement.}
\label{fig:modelExample}
\end{figure}

\section{Modeling Secure Control Transactions}
\label{sec:modeling}

Let us consider the workload imposed by a secure control transaction, such as the one shown in \figref{motivationFig} (center, right).
 Schedulability analysis for such workloads using the standard task model $(WCET,period,deadline)$ is highly pessimistic -- clearly, the task sets from the figures would be rejected; the reason is that the standard task and message models accepting a single WCET parameter coarsely overapproximate the load on the ECUs and the shared network imposed by sparsely added security overhead. 
 Thus, we need a 
 model that captures the variable execution (or transmission) times of such security-aware real-time~tasks. 

 The multi-frame task model \cite{Multiframe} supports tasks that have execution times varying among consecutive invocations (called \emph{frames}) in an arbitrary pattern. However, this model is overly general in that it allows any pattern of frames to be specified, and schedulability analyses for multi-frame tasks often assume that the worst-case alignment of frames is legal --- exactly the scenario we want to avoid.
In our case, it suffices to facilitate two frame sizes, regular and extended, with extended corresponding to executions that include security-related overhead, as well as additional parameters specifying extended frame period and offset; this allows for capturing of periodic cumulative data authentication policies,
as the ones applied to tasks in Fig.~\ref{fig:motivationFig} (center, right). 

Our goal is to develop a methodology for completing a set of transactions on the available shared network and set of ECUs, while taking into account the required level of periodic data integrity guarantees, which are obtained from the predefined QoC under attack requirements. Thus, we assume that non-zero task offsets and constrained deadlines are not known a priori. Instead, the respective task   sets are considered incomplete in the sense that their periods and execution/transmission times are known, but the offsets and deadlines for each of the tasks that produce a schedulable set of transactions are to be determined. Consequently, we model the \emph{security-aware tasks} as $T_i(C_i,p_i,\phi_i,d_i,l_i,f_i,s_i)$, where
%
%
\begin{itemize}
    \item $C_i=[c_i^{reg},c_i^{ext}]$ is a WCET array for two frame types, regular and extended, respectively -- $c_i^{reg}$ is equal to $c^{sens}, c^{net}$ or $c^{ctrl}$ for  $T_i^{sens}$, $M_i^{net}$ and $T_i^{ctrl}$, respectively;
    \item $p_i$ is the period at which jobs are released, $\phi_i$ is the release offset, and $d_i$ is the task's deadline relative to its activation;
    \item $l_i$ is the distance (i.e., number of control periods) between consecutive authentication blocks;
    \item $f_i$ captures the length of the authentication block -- i.e., 
    the number of authenticated frames within one authentication period (i.e., within every interval of length~$l_ip_i$);
    \item $s_i$ is the
     initial authentication 
     offset
    -- i.e., the integer multiple of periods by which the initial authentication is deferred.
\end{itemize}
%
Note that the task offset consists of two components: $\phi_i$ and $s_ip_i$;
 $\phi_i$ is required to encode precedence constraints and applies to all jobs of the considered $i^\text{th}$ task. On the other hand, $s_ip_i$ determines the additional offset of only extended frames, which 
provides a degree of freedom during scheduling to avoid extended frame alignment scenarios emphasized in the motivating~example (Fig.~\ref{fig:motivationFig} (left)).


For tasks in any secure control transaction $\mathcal{T}_i$, some of the above parameters (i.e., $s_i,f_i,l_i$)  directly follow from the employed authentication policy $\mu_i(s_i,f_i,l_i)$, as
illustrated in~\figref{modelExample} for one example transaction.
First, $l_i^{sens}=l_i^{net}=l_i^{ctrl}=l_i$, since the authentication period is the same for both tasks and the communication message. 
In addition, $f_i^{sens}=f_i$, as $T_i^{sens}$ task computes a cumulative MAC over a block of $f_i$ consecutive measurements, before attaching the MAC to the last message from the block.
Also, $f_i^{ctrl}=1$ 
since 
$T_i^{ctrl}$ task verifies (i.e., authenticates) a block of consecutive measurements only once when it receives the cumulative MAC, prepared by $T_i^{sens}$ and delivered by $M_i^{net}$.
Thus, 
it also holds that $f_i^{net}=1$. 

Similarly, initial authentication offsets depend on the authentication policy used. First, $0\leq s_i^{sens}\leq l_i-f_i$ since the first computation of cumulative MAC within a block must be done early enough to allow for execution of $f_i$ consecutive extended frames within $l_i$ periods of $T_i^{sens}$. Additionally, the initial extended frames of the message $M_i^{net}$ and control task $T_i^{ctrl}$ have constrained start times as $s_i^{ctrl}=s_i^{net}=s_i^{sens}+f_i^{sens}-1$, as $T_i^{sens}$ 
task 
computes cumulative MAC over $f_i^{sens}$ periods, followed by an authenticated transmission and an authenticating control task, 
as shown in~\figref{modelExample}.

Problem~1 can now be reformulated around synthesis of feasible deadlines ($d_i^{sens}, d_i^{net}, d_i^{ctrl}$), offsets  ($\phi_i^{sens}, \phi_i^{net}, \phi_i^{ctrl}$) and initial authentication offsets ($s_i^{sens}, s_i^{net}, s_i^{ctrl}$) for all secure control transactions $\mathcal{T}_i$, $i=1,...,N$, such that the precedence constraints from~\eqref{eq:c1}-\eqref{eq:c3} are satisfied, and for which the obtained complete transaction set $\mathcal{T}$ is schedulable under preemptive EDF for ECUs and non-preemptive EDF for the network.
Thus, the following section starts by deriving schedulability conditions for the presented task model under preemptive and non-preemptive EDF~scheduling.


\section{Schedulability Analysis for Secure Control Transactions}
\label{sec:schedulabilityAnalysis}

\subsection{Schedulability of Security-Aware Tasks}
We consider a schedulability condition for the sensing and control tasks based on the \emph{processor demand criterion}~\cite{Baruah1990}. 
Note that the condition from~\cite{lesi_tecs17} cannot be used as it does not support the use of cumulative periodic authentication on sensing tasks, as well as  general offset and deadline values for tasks and messages in secure control transactions. 
On the other hand, necessary and sufficient schedulability conditions for the general task model (i.e., with non-zero offsets and deadlines differing from periods) under the preemptive EDF scheduler are formulated in~\cite{Baruah1990,ButtazzoBook}, starting from the~following.
%


\begin{definition}[\cite{Baruah1990}]
\label{def:dbfDef}
    The demand function $df_i$ of a standard task $T_i(c_i,p_i,\phi_i, d_i)$ on  interval $[t_1,t_2]$ is
$       ~~df_i(t_1,t_2) = \sum\limits_{\substack{\alpha_{i,j} \geq t_1,~\delta_{i,j} \leq t_2}}c_i$,
    where $c_i$ is the WCET of the $i^{\text{th}}$ task, while $\alpha_{i,j}$ represents the time of the  $j^{\text{th}}$ job arrival, and $\delta_{i,j}$ its respective deadline.
\end{definition}
\begin{theorem}[\cite{Baruah1990}]
\label{thm:feasibilityCond}
    A task set $\{ T_1(c_1,p_1,\phi_1,d_1), T_2(c_2,p_2,\phi_2,d_2)$,..., $T_N(c_N,p_N,\phi_N,d_N) \}$ is schedulable by preemptive EDF if and only~if
$        \sum_{i=1}^Ndf_i(t_1, t_2)\leq t_2-t_1,$ for all $t_1, t_2$ such that $t_1< t_2$.
\end{theorem}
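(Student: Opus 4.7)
The plan is to prove the two directions separately, using essentially the classical processor-demand argument of Baruah, Rosier, and Howell.

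Necessity (schedulability implies the demand condition): Fix any $t_1 < t_2$ and suppose EDF produces a feasible schedule. Consider any job with arrival $\alpha_{i,j} \geq t_1$ and deadline $\delta_{i,j} \leq t_2$. Such a job cannot begin execution before $t_1$ (by the release-time constraint $\alpha_{i,j} \geq t_1$) and must finish by $t_2$ (since its deadline $\delta_{i,j} \leq t_2$ is met). Hence its entire execution of $c_i$ units occurs within $[t_1,t_2]$. Summing over all such jobs, the total work $\sum_i df_i(t_1,t_2)$ is performed within an interval of length $t_2-t_1$, so it cannot exceed $t_2-t_1$.

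Sufficiency (the demand condition implies schedulability): I would argue the contrapositive. Assume EDF misses a deadline, and let $t_2$ be the \emph{first} such deadline miss, occurring for some job $(i^\star,j^\star)$ with $\delta_{i^\star,j^\star}=t_2$. Define $t_1$ as the latest instant strictly before $t_2$ at which either (i) the processor is idle, or (ii) a job with absolute deadline strictly greater than $t_2$ is executing (taking $t_1=0$ if no such instant exists). The key technical step is to establish two properties of the interval $[t_1,t_2]$: (a) the processor is continuously busy on $[t_1,t_2]$ executing only jobs whose deadlines are at most $t_2$, which follows directly from the maximality in the definition of $t_1$; and (b) every job that executes in $[t_1,t_2]$ has arrival time $\geq t_1$. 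Property (b) is the subtle one: if some pending job at time $t_1$ had arrived before $t_1$ and had deadline $\leq t_2$, then at $t_1$ EDF would either run it (contradicting idleness) or preempt the job with later deadline (contradicting case (ii)). Hence every bit of processor time spent in $[t_1,t_2]$ is devoted to jobs that contribute to $df_i(t_1,t_2)$.

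Putting the pieces together, the fully busy processor delivers exactly $t_2-t_1$ units of work to jobs contributing to $\sum_i df_i(t_1,t_2)$, yet the job $(i^\star,j^\star)$ itself also contributes to this sum while still having a positive remaining execution requirement at $t_2$. Therefore $\sum_i df_i(t_1,t_2) > t_2-t_1$, contradicting the hypothesis. The main obstacle is the choice and analysis of $t_1$: showing that EDF's priority rule combined with the maximality of $t_1$ forces both the continuous-busy property and the restriction of executing jobs to those with arrivals and deadlines inside $[t_1,t_2]$. Care is needed with corner cases (the schedule being busy from time $0$, or multiple jobs with equal deadlines), but once the witness interval is correctly identified, the final inequality follows by a clean counting argument.
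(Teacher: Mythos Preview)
Your argument is correct and is precisely the classical processor-demand proof of Baruah, Rosier, and Howell. However, the paper does not actually prove this theorem at all: it is quoted verbatim from \cite{Baruah1990} as a known result and used as a building block (the paper's own contribution in this section is the non-preemptive analogue, Theorem~\ref{thm:NPRfeasibilityCond}). So there is no ``paper's proof'' to compare against; your write-up is a faithful reconstruction of the original source the paper cites.
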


Since, by definition, the demand function is piecewise constant with magnitude increasing in steps at time instants of job deadlines, the condition in \thmref{feasibilityCond} can be evaluated over a discrete and bounded time testing set. Formally, it is necessary to test the processor demand condition for all $t_{k_1}< t_{k_2} \leq t^{max}$ such that
\begin{equation}\label{eq:timeTestingSets}
    \begin{split}
      t_{k_1} \in TS_{arr} = &\bigcup_{i=1}^{N}\{ t | t=\phi_i+k_1 p_i, k_1\in\mathbb{N}_0, t\leq t^{max}\},\\
      t_{k_2} \in TS_{dead} = &\bigcup_{i=1}^{N}\{ t | t=d_i+k_2 p_i, k_2\in\mathbb{N}_0, t\leq t^{max}\},
    \end{split}
\end{equation}
where $t^{max}=\max_{i}\phi_i+\max_{i}d_i+ 2\cdot lcm\{ p_1,...,p_N \}$ is the maximal time up to which the CPU demand has to be tested to ensure correctness of analysis~\cite{LeungMerrill}, and $lcm$ is the least common~multiple.

We use this schedulability condition for schedulability analysis of security-aware 
$T_i^{sens}$ and $T_i^{ctrl}$ tasks -- to simplify notation, we omit superscripts and denote the tasks as $T_i$ where possible. 
To evaluate the demand function on interval $[t_{k_1},t_{k_2})$, we compute the number of regular and extended frames released at or after $t_{k_1}$, that have deadlines at or before $t_{k_2}$ as
\begin{equation}\label{eq:etaNormal}
     \eta_i^{r\&e}(t_{k_1},t_{k_2}) = max\left\{ 0,\floor*{\frac{t_{k_2}-\phi_i-d_i}{p_i}} - max \left\{ 0,\ceil*{\frac{t_{k_1}-\phi_i}{p_i}} \right\} +1 \right\}.
\end{equation}
Similarly, extended frames in this interval can be counted as
\begin{equation}\label{eq:etaExtended}
    \begin{split}
       \eta_i^{ext}(t_{k_1},t_{k_2})= \sum_{m=0}^{f_i-1}\:\: &max \left\{ 0,\floor*{\frac{t_{k_2}-(s_i+m)p_i-\phi_i-d_i}{l_ip_i}}-\right.\\
       &max \left. \left\{0,\ceil*{\frac{t_{k_1}-(s_i+m)p_i-\phi_i}{l_ip_i}} \right\} +1  \right\}.
    \end{split}
\end{equation}
Here, the appropriate values for $f_i$ should be used -- i.e.,~$f_i^{ctrl}=1$ for $T_i^{ctrl}$ and $f_i^{sens}=f_i$ for $T_i^{sens}$.

The demand function for a single task can now be posed as the total processor demand of regular and extended frames~as
\begin{equation}\label{eq:dfK1K2}
    df_i(t_{k_1},t_{k_2}) = c_i^{reg}\eta_i^{r\&p}(t_{k_1},t_{k_2})+\Delta c_i\eta_i^{ext}(t_{k_1},t_{k_2}),
\end{equation}
where $\Delta c_i = c_i^{ext}-c_i^{reg}$. We can thus formulate the necessary and sufficient schedulability condition as: $\forall t_{k_1}\in TS_{arr},\:\forall t_{k_2}\in TS_{dead}$
\begin{equation}\label{eq:sumDemandCondition}
           \sum\limits_{i=1}^{N} df_i(t_{k_1},t_{k_2}) \leq t_{k_2}-t_{k_1},\\~~
            \hbox{if }~ t_{k_1}<t_{k_2}.
\end{equation}

\subsection{Schedulability of Security-Aware Messages}
To analyze schedulability of security-aware network messages (i.e., with periodic cumulative authentication), 
we start from 
the following theorem that provides a necessary and sufficient schedulability condition for \emph{sporadic} real-time messages under non-preemptive~EDF.

\begin{theorem}[\cite{nprEDF-CAN}]
\label{thm:nonpreemptiveSporadicTheorem}
Consider a set of real-time messages $M_i(c_i,p_i,d_i)$, $1\leq i \leq N$, where $p_i$ is the minimum message inter-arrival time. 
The message set is schedulable under non-preemptive EDF over a network shared with non real-time messages with maximum transmission time $c_{max}^{NRT}$~if and only if $\sum_{i=1}^{N}\frac{c_i}{p_i}\leq 1$ and
\begin{equation}
\label{eq:NP_schcond}
\sum_{i=1}^{N}max\left\{0,\floor*{\frac{t-d_i}{p_i}}+1\right\}c_i + c_m \leq t_k, \forall t_k \in TS,
\end{equation}
where $TS=\bigcup\limits_{i=1}^{N}\left\{d_i+jp_i|j=0,...,\floor*{\frac{t_{max}-d_i}{p_i}}\right\}$,\\
$t_{max}=\max\left\{ d_1,...,d_N,\left(c_m+\sum_{i=1}^{N}\left(1-\frac{d_i}{p_i} \right)c_i\right) / (1-U_\mathcal{M}) \right\}$, and
$c_m=max\{c_{max}^{NRT}, \max_{i=1}^{N}c_i\}$.
\end{theorem}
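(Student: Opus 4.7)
The plan is to establish both necessity and sufficiency by adapting the classical processor demand criterion to non-preemptive EDF and accounting for blocking from the shared non-real-time traffic. The utilization condition $\sum_i c_i/p_i \leq 1$ is immediate on necessity grounds: if it failed, the long-run real-time demand would exceed channel capacity and some deadline would eventually be missed. For the demand inequality~\eqref{eq:NP_schcond}, I would argue necessity by construction of a worst-case arrival pattern: a message of length $c_m$ (either the longest non-RT frame or the longest RT message) starts transmission just before a synchronous release of all $M_i$, which then re-release periodically; if the sum in~\eqref{eq:NP_schcond} strictly exceeded some $t_k\in TS$, the aggregated workload whose deadlines fall within $[0,t_k]$ plus the unavoidable $c_m$ blocking would exceed $t_k$, forcing a deadline miss.

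For sufficiency I would use a contradiction argument on the first deadline miss. Assume a job misses its deadline at time $t^*$, and define $t_0$ as the latest instant prior to $t^*$ at which either the channel was idle or it was transmitting a message whose deadline exceeds $t^*$. Within $[t_0,t^*]$, by the choice of $t_0$, only messages released at or after $t_0$ with deadlines at or before $t^*$ are served on this channel, except for at most one in-progress message of length at most $c_m$ that began strictly before $t_0$ and cannot be preempted. The number of jobs of $M_i$ released in $[t_0,t^*]$ with deadline at or before $t^*$ is at most $\max\{0,\lfloor(t^*-t_0-d_i)/p_i\rfloor+1\}$, so the delivered work is bounded by the left-hand side of~\eqref{eq:NP_schcond} with $t=t^*-t_0$. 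A missed deadline forces this bound to exceed $t^*-t_0$, contradicting the hypothesis once one observes that the demand-bound function is a right-continuous step function jumping only at points $d_i+jp_i$, so it suffices to check $t\in TS$.

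Third, I would show the testing set is finite by bounding $t_{\max}$ via a synchronous busy-period argument. Starting from a maximal blocking increment $c_m$ at $t=0$ and synchronous releases, the residual workload carried at time $t$ is at most $c_m + \sum_i (1 - d_i/p_i)c_i$ accumulated from boundary effects, and drains at average rate $1-U_\mathcal{M}$ per unit time; solving for the termination instant yields exactly the stated bound $t_{\max}$. Beyond $t_{\max}$, periodicity of the arrival pattern implies that if~\eqref{eq:NP_schcond} holds on $[0,t_{\max}]$ it holds everywhere, so only the finite set $TS$ needs to be tested.

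The main obstacle will be justifying the two non-preemption subtleties cleanly. First, the definition of $t_0$ must correctly separate work that \emph{must} be processed before $t^*$ from the single blocking message; under non-preemptive EDF, the blocker may itself be an RT message with deadline after $t^*$, which is why $c_m$ takes the max over both $c_{\max}^{NRT}$ and $\max_i c_i$ rather than just $c_{\max}^{NRT}$. Second, showing that $t_{\max}$ truly upper-bounds any synchronous busy period requires care because the $(1-d_i/p_i)$ factors can be negative when $d_i>p_i$, so the bound has to be read as a worst-case envelope rather than a tight arrival curve. Since the result is attributed to~\cite{nprEDF-CAN}, my plan is to follow their construction and verify that their blocking term and busy-period bound translate without change to the CAN-style network plus NRT-traffic model used here.
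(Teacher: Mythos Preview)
The paper does not prove this theorem at all: it is quoted verbatim from~\cite{nprEDF-CAN} as a known result and used only as a starting point for the discussion that follows. There is therefore no ``paper's own proof'' to compare your proposal against. Your sketch is a reasonable outline of the standard argument for non-preemptive EDF demand-bound tests (first-deadline-miss plus blocking term for sufficiency, synchronous-arrival scenario for necessity, busy-period bound for finiteness of $TS$), and it correctly identifies why $c_m$ must dominate both the longest non-real-time frame and the longest real-time message. If the goal were to actually reprove the cited result, your plan is on the right track, though the busy-period bound derivation you give is informal and would need the usual fixed-point argument to be rigorous. But for the purposes of this paper, no proof is expected here; the theorem is simply imported.
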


\begin{figure}[!t]
  \centering
  \includegraphics[width=0.53\linewidth]{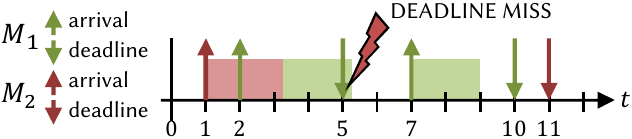}
  \caption{Example message set $M_1(\phi_1=2,c_1=2,p_1=5,d_1=3), M_2(\phi_2=1,c_2=2.1,p_2=10,d_2=10)$ --- although the schedulability test for nonpreemptive messages with offsets from~\cite{zuberiShin} is satisfied, $M_1$ misses its deadline at $t=5$ due to an earlier release of message~$M_2$.}
\label{fig:falseExample}
\end{figure}

To the best of our knowledge, there does not exist an efficient method 
to test schedulability for strictly periodic asynchronous messages under non-preemptive EDF. The conditions from~\cite{zuberiShin} extend \thmref{nonpreemptiveSporadicTheorem} for messages with offsets in order to support transaction scheduling. The resulting theorem from~\cite{zuberiShin} replaces every appearance of relative deadline $d_i$ in \thmref{nonpreemptiveSporadicTheorem} with absolute deadline $d_i+\phi_i$ to account for offsets. In our case, using this theorem would be pessimistic since the conditions derived for sporadic messages cannot be adjusted for multi-frame messages. Also, examples as in \figref{falseExample} show that the schedulability condition from~\cite{zuberiShin} does not always~hold. 


On the other hand, a utilization-based test for non-preemptive EDF is derived in~\cite{sanjoyNPRedf}.
As our goal is to determine a set of offsets and deadlines that yields a schedulable set of secure transactions, this test 
cannot be used as it condenses all task properties into a single measure.
Still, by following the reasoning presented therein, we formulate  the following sufficient schedulability condition. 

\begin{theorem}
\label{thm:NPRfeasibilityCond}
    A message set $\{ M_1(c_1,p_1,\phi_1,d_1), M_2(c_2,p_2,\phi_2,d_2)$, ..., $M_N(c_N,p_N,\phi_N,d_N) \}$ is nonpreemptively schedulable by EDF if $\sum_{i}df_i(t_1, t_2)\leq t_2-t_1-c_{max},$ for all $t_1, t_2$ such that $t_1< t_2$, where $c_{max}=\max_{i} c_i$ is the longest of transmission times of all $N$ messages.
\end{theorem}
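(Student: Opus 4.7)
The plan is to proceed by contradiction. I would assume that the demand-based condition $\sum_{i} df_i(t_1, t_2) \leq t_2 - t_1 - c_{max}$ holds for every pair $t_1 < t_2$, yet the non-preemptive EDF schedule misses some deadline. Let $t_2$ be the absolute deadline of the earliest missed job, belonging to some message $M_j$.

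First, I would introduce a busy-period argument: define $t_1$ to be the latest instant $t_1 \leq t_2$ such that either the network is idle at $t_1^-$, or the message whose transmission is in progress across $t_1$ has an absolute deadline strictly greater than $t_2$. The choice of $t_1$ ensures that during $[t_1, t_2]$ the EDF scheduler transmits only jobs with absolute deadline at most $t_2$, apart from at most one lower-priority transmission that was non-preemptively in progress just before $t_1$. By definition of $c_{max}$, this blocking transmission consumes at most $c_{max}$ units of bandwidth on $[t_1, t_2]$.

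Next, I would argue that the total work EDF must complete on $[t_1, t_2]$ in order for $M_j$'s job to meet its deadline equals the sum $\sum_{i} df_i(t_1, t_2)$ over all jobs released no earlier than $t_1$ with absolute deadline no later than $t_2$; this uses the defining property of $t_1$, which rules out any carry-over work with deadline $\leq t_2$ left pending from before $t_1$. Since the deadline at $t_2$ is missed, the transmissions contributing to the demand, together with the at-most-$c_{max}$ non-preemptive blocking, cannot have fit into $[t_1, t_2]$, yielding
\begin{equation*}
\sum_{i} df_i(t_1, t_2) + c_{max} > t_2 - t_1,
\end{equation*}
which contradicts the hypothesis $\sum_{i} df_i(t_1, t_2) \leq t_2 - t_1 - c_{max}$.

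The main obstacle will be rigorously justifying the selection of $t_1$ and verifying that no additional bandwidth is consumed on $[t_1, t_2]$ beyond the counted demand and the single blocking transmission. In particular, I would need to argue (by induction on released jobs, or by invoking work conservation of EDF) that every pending job with deadline $\leq t_2$ released before $t_1$ has already completed by $t_1$, and to carefully handle the edge case in which the blocking message straddles $t_1$ but has deadline strictly after $t_2$. This line of reasoning parallels the classical non-preemptive busy-period analyses in~\cite{sanjoyNPRedf} and~\cite{nprEDF-CAN}, but cast in the arbitrary-interval demand-function setting needed for the multi-frame security-aware messages considered in Problem~\ref{schedulingProblem}.
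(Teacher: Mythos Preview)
Your proposal is correct and follows essentially the same approach as the paper: a contradiction argument that fixes $t_2$ at the missed deadline, selects $t_1$ as the latest prior instant at which the processor is either idle or busy with a message whose deadline exceeds $t_2$, and then bounds the work executed on $[t_1,t_2]$ by the demand plus at most one blocking transmission of length $c_{max}$. The paper presents this slightly more tersely by splitting into the two cases (idle before $t_1$ versus blocking before $t_1$), but the structure and key inequality are identical.
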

\begin{proof}
    Suppose that the theorem's demand-based condition is satisfied for all $t_1, t_2$, and that there is a deadline miss at some instant $t_2^*=t_{dm}$. 
Let $t_1^* \leq t_{dm}$ be the closest to $t_{dm}$ instant such that the network is busy transmitting only those messages with deadlines $\leq t_{dm}$. Then, right before $t_1^*$, the network may be idle or a message with deadline $\geq t_{dm}$ is being transmitted.

    In the case when the network is idle right before $t_1^*$, then the total network demand imposed by all messages eligible to be transmitted during $[t_1^*,t_2^*]$ is  $\sum_{i}df_i(t_1^*, t_2^*)$, by the definition of the demand function, and since there is a deadline miss at $t_2^*$, the demand must be greater than the network time available, i.e., $\sum_{i}df_i(t_1^*, t_2^*) > t_2^*-t_1^*$. This contradicts the theorem statement.

    In the case when the network is transmitting a message with deadline $\geq t_{dm}$, then the worst case network demand of all messages eligible to be transmitted during $[t_1^*,t_2^*]$ is $\sum_{i}df_i(t_1^*, t_2^*)+c_{max}$. Since there is a deadline miss at $t_2^*$, the demand must be greater than the available network time, i.e., $\sum_{i}df_i(t_1^*, t_2^*) > t_2^*-t_1^*-c_{max}$, which contradicts the theorem, and thus concludes the~proof.
\end{proof}

The intuition behind this theorem can be supported by the claim that non-preemptive EDF schedules by time $t^*+c_{max}$ at least as much work imposed by a set of tasks as preemptive EDF schedules by $t^*$~\cite{sanjoyNPRedf}.
 In this case, 
 the total network demand by a security-aware message can be expressed as in~\eqref{eq:dfK1K2}, with $f_i^{net}=1$  used for extended transmissions in~\eqref{eq:etaExtended}. 
%
In addition, 
the time testing sets remain the same as in~\eqref{eq:timeTestingSets}. As we demonstrate on examples in Section~\ref{sec:generalEvaluation}, this condition is less conservative in cases when message transmission times are significantly shorter than their respective periods. 
We then show in Section~\ref{sec:caseStudy} that this is commonly true in practical systems.

\begin{remark}[Accounting for Jitter]
To understand how realistic implementation phenomena such as jitter affect the presented analysis, we consider their effects on task and message scheduling. 
In the case of task-level jitter, existing approaches to jitter accounting can be applied~\cite{StankovicEDF}. In essence, if a task experiences jitter $j_i$, the inter-arrival spacing may be shorter than $p_i$. From the worst-case schedulability standpoint, this scenario pertains to the arrival pattern where all tasks arrive such that they must complete execution by the relative deadline $d_i-j_i$, rather than by $d_i$ time units. Shortening the permissible deadline by the worst-case jitter can be easily included in the demand-based condition~\eqref{eq:sumDemandCondition}. This does not affect the complexity of the MILP implementation of the parameter synthesis problem, as worst-case jitter figures as a set of known constant parameters.
For message scheduling, in most cases we do not need to use this approach, as the $c_{max}$ term introduced in the non-preemptive schedulability conditions to account for the worst-case blocking any message may experience upon arrival, is rarely needed in its entirety; 
 this holds since worst-case blocking will rarely occur. 
 This conservativeness effectively captures jitter,
as jitter levels are highly unlikely to exceed message transmission times in any practical network realization. 
\end{remark}


\section{Synthesis of Schedulable Secure Control Transactions} 
\label{sec:MILP}

The schedulability conditions from Section~\ref{sec:schedulabilityAnalysis}, along with the task-precedence constraints from Section~\ref{sec:CT},  can be used to formulate a parameter synthesis problem that produces a feasible set of task deadlines, offsets, and initial authentication offsets.
 However, non-linearity of functions counting the number of task invocations and message transmissions~\eqref{eq:etaNormal} and ~\eqref{eq:etaExtended} precludes efficient search of the parameter space. 
 Thus, in this section we map the demand-based schedulability conditions into a set of linear constraints, and formulate  a mixed-integer linear program~(MILP) to synthesize  task and message parameters that result in a schedulable set of secure control transactions. 
Since the schedulability conditions for preemptive and non-preemptive EDF differ only in the constant term $c_{max}$ on the right side of the demand constraints from Theorems~\ref{thm:feasibilityCond} and~\ref{thm:NPRfeasibilityCond}, 
in this section we may omit superscripts $sens$, $ctrl$, and $net$ for specific variables, where no confusion  about 
the parameters~arises.

Consider the workload of a sensing task $T_i^{sens}$ that also incorporates 
cumulative periodical authentications. 
Let binary variables $a_{k,j,m}^i$ for 
$T_i^{sens}$ indicate that the absolute deadline of the $m^\text{th}$ extended frame of the $j^\text{th}$ block of cumulative authentications is at or earlier than a time-testing instant $t_k$. This 
can be specified~as
\begin{equation}\label{eq:aVariables}
    a_{k,j,m}^i = 1  \Leftrightarrow t_k\geq (s_i+m)p_i+\phi_i+d_i+(j-1)l_ip_i,
\end{equation}
\begin{equation}\label{eq:aVariablesIndices}
    \begin{aligned}
        1\leq i\leq N,\qquad &1\leq k \leq |TS_{arr}|+|TS_{dead}|,\qquad
        1\leq j\leq \floor*{\frac{t^{max}}{l_ip_i}},\qquad &0\leq m\leq f_i-1,
    \end{aligned}
\end{equation}
where $TS_{arr}$ and $TS_{dead}$ are defined in~\eqref{eq:timeTestingSets}. Note that control tasks $T_i^{ctrl}$ and  messages $M_i^{net}$ are supported by simply removing the authentication iterator $m$ (since $f_i^{ctrl}=f_i^{net}=1$). A similar relation can be established for regular frames, where binary variables $b_{k,h}^i$ indicate that the $h^\text{th}$ regular frame of the $i^\text{th}$ sensing task is due by the $k^\text{th}$ time testing instant $t_k$. This can be captured~by
\begin{equation}\label{eq:bVariables}
    b_{k,h}^i = 1  \Leftrightarrow t_k\geq \phi_i+d_i+(h-1)p_i, \quad
        1\leq i\leq N, 1\leq k \leq |TS_{arr}|+|TS_{dead}|,1\leq h\leq \floor*{\frac{t^{max}}{p_i}}.\\
\end{equation}
Identical constraint can be written for control tasks $T_i^{ctrl}$ and  messages $M_i^{net}$. 
These variables enable us to concisely specify the number of respective jobs from \eqref{eq:etaNormal}~and~\eqref{eq:etaExtended} respectively~as
\begin{align}
    \eta_i^{r\&e}(t_{k_1},t_{k_2}) &= \sum_{j=1}^{\frac{t^{max}}{p_i}}\left(b_{k_2,h}^i-b_{k_1,h}^i\right),
    \label{eq:n1}\\
    \eta_i^{ext}(t_{k_1},t_{k_2}) &= \sum_{m=0}^{f_i-1}\sum_{j=1}^{\frac{t^{max}}{l_ip_i}}\left(a_{k_2,j,m}^i-a_{k_1,j,m}^i\right).
    \label{eq:n2}
\end{align}
Hence, a task's processor demand can be cast as a linear function of variables $a_{k,j,m}^i$ and~$b_{k,h}^i$ when~\eqref{eq:n1},~\eqref{eq:n2} are instantiated in~\eqref{eq:dfK1K2}.
Note that since network and ECUs may not have the same hyperperiod, $t^{max}$ should be computed independently for each ECU.

%
%

Note that, since task offsets and deadlines are variables, the time testing instants are also variables, as defined in~\eqref{eq:timeTestingSets}. Therefore, 
we need to ensure that we only consider the schedulability constraints from  Theorems~\ref{thm:feasibilityCond} and~\ref{thm:NPRfeasibilityCond} for  $k_1$ and $k_2$ such that $t_{k_1} < t_{k_2}$.
 This is achieved with a set of constraint-enabling variables $e_{k_1,k_2}$ such that
\begin{equation}\label{eq:demandEnableRelation}
    e_{k_1,k_2}=1 \Rightarrow \sum\limits_{i=1}^{N} df_i(t_{k_1},t_{k_2}) \leq t_{k_2}-t_{k_1},
\end{equation}
for preemptive EDF, where $e_{k_1,k_2}$ relates to the time testing instants as
\begin{equation}\label{eq:demandEnableVariablesRelation}
    e_{k_1,k_2}=1 \Leftrightarrow t_{k_2} > t_{k_1}.
\end{equation}
%
%
In addition,  
the right side of~\eqref{eq:demandEnableRelation}
should be decremented by $c_{max}^{net}$ when considering message scheduling, due to the scheduling non-preemptivity 
(\thmref{NPRfeasibilityCond}).

Finally, to impose a bounded end-to-end delay,  constraints that relate deadlines of tasks in a transaction can be specified as
\begin{equation}\label{eq:boundedDeadlines}
    d_i^{sens}+d_i^{net}+d_i^{ctrl}=p_i, \qquad 1 \leq i\leq N.
\end{equation}

\begin{remark}[Handling of Indicator Constraints]
    While the processor demand conditions can be directly implemented within an MILP, constraints~\eqref{eq:aVariables},~\eqref{eq:bVariables}, \eqref{eq:demandEnableRelation}, and \eqref{eq:demandEnableVariablesRelation} cannot be directly specified as such in some MILP solvers. Those constraints can be linearized by using the "Big M" method for handling indicator constraints~\cite{bigM}. In the case of~\eqref{eq:aVariables} and ~\eqref{eq:bVariables}, we can write
    \begin{equation}\label{eq:aConstraintsBigM1}
        -t_k+\phi_i+d_i+Ma_{k,j,m}^i \leq M-[ s_i+m+(j-1)l_i ]p_i,
    \end{equation}
    \begin{equation}\label{eq:aConstraintsBigM2}
        t_k-\phi_i-d_i-Ma_{k,j,m}^i < [ s_i+m+(j-1)l_i ]p_i,
    \end{equation}
    \begin{equation}\label{eq:bConstraintsBigM1}
        -t_k+\phi_i+d_i+Mb_{k,h}^i \leq M-(h-1)p_i,
    \end{equation}
    \begin{equation}\label{eq:bConstraintsBigM2}
        t_k-\phi_i-d_i-Mb_{k,h}^i < (h-1)p_i,
    \end{equation}
    where $M$ is a large constant. Similarly,~\eqref{eq:demandEnableRelation} and~\eqref{eq:demandEnableVariablesRelation} can be cast as linear constraints by enforcing
    \begin{equation}\label{eq:dfConstraintsBigM}
        M(e_{k_1,k_2}-1)+\sum\limits_{i=1}^{N} df_i(t_{k_1},t_{k_2}) \leq t_{k_2}-t_{k_1},
    \end{equation}
    \begin{equation}\label{eq:eConstraintsBigM1}
        t_{k_2}-t_{k_1} > M(e_{k_1,k_2}-1),
    \end{equation}
    \begin{equation}\label{eq:eConstraintsBigM2}
        t_{k_2}-t_{k_1} <   Me_{k_1,k_2}.
    \end{equation}
\end{remark}
\begin{remark}[Handling of Strict Inequalities]
    Most MILP solvers do not allow specification of strict inequalities. Constraints~\eqref{eq:aConstraintsBigM2} and~\eqref{eq:bConstraintsBigM2} can be converted into non-strict inequalities by adding a small $\epsilon>0$ to every $t_k$. Furthermore, \eqref{eq:eConstraintsBigM1}~can be directly converted into non-strict inequalities, while~\eqref{eq:eConstraintsBigM2} requires addition of a small $\epsilon>0$ on the left-hand side. Note that this may allow the time testing instants to meet during the solving process i.e., $t_{k_1}=t_{k_2}$ is possible for some pair $(k_1,k_2)$. This does not affect correctness of the formulation, but can only introduce redundant trivial demand constraints (i.e., over the interval of zero length). However, this does create an undesirable corner case. Despite the lack of an objective (recall that we are only interested in finding a feasible solution if such exists), solvers tend to minimize variables, and may thus choose to zero all deadlines. This corner case is formally allowed if a time testing instant corresponding to a deadline of a task can coincide with its arrival. Since the demand constraint is satisfied (the processor demand over the interval of length zero is equal to the supply over the same interval), this modeling anomaly requires lower-bounding deadlines of each of the tasks. Simply, $d_i \geq 1$,  for all $i$ suffices.

Additionally, introducing $\epsilon$ to handle strict inequalities may affect the choice of value for $M$. Specifically, the values for $M$ and $\epsilon$ must be selected such that 
 no negative effects occur with the use of "big-M" methods
 due to finite precision implementation of the employed MILP solver --- that no constraint become active due to finite values for $M$. Thus, we set these values such that it holds~that
    \begin{equation*}\label{mEpsilonCond}
        M\delta_{int}+\delta_{constr} < \epsilon < 1-M\delta_{int}-\delta_{constr},
    \end{equation*}
    where $\delta_{int}$ is the integer feasibility tolerance and $\delta_{constr}$ is the constraint satisfiability tolerance of the employed MILP solver. Moreover, $M$ must be sufficiently large to ensure constraint satisfiability is not compromised for large $t_k$-s from the set $TS$.
\end{remark}

The aforementioned constraints form a MILP formulation whose variables  are the deadlines ($d_i^{sens}, d_i^{net}, d_i^{ctrl}$), offsets  ($\phi_i^{sens}, \phi_i^{net}, \phi_i^{ctrl}$) and initial authentication offsets ($s_i^{sens}, s_i^{net}, s_i^{ctrl}$), as well as the introduced binary variables,
but without an objective specification. 
If the feasible set of the problem is non-empty, our transaction set becomes complete and guaranteed schedulable. This approach, however, may be impractical for realistic scenarios. For example, a unified MILP for the case study presented in Section~\ref{sec:caseStudy} features over 10 million variables and 100 million constraints. Therefore, in the rest of the section, we discuss methods for complexity reduction that we apply towards tackling realistic~problems.
%
%
%
%
\subsection{Complexity Reduction}
\label{sec:complexity}
To reduce the number of used variables and constraints, we first
consider the time testing sets in~\eqref{eq:timeTestingSets} for preemptive EDF. 
For a large number of arrival-deadline pairs $(t_{k_1},t_{k_2})$, defining a variable indicating their ordering as in~\eqref{eq:demandEnableVariablesRelation} is not necessary, and thus the corresponding demand constraints can be omitted. 
For example, 
arrival time of any single job may never exceed the deadline of that, or any subsequent invocations of the task. Also, the deadline of a specific task invocation always occurs after the arrival of that or any earlier task invocations. Formally, since $e_{k_1,k_2}=0, \forall i, \forall k_2 \geq k_1 \text{~such that~} \phi_i+k_1p_i \geq d_i+k_2p_i$, and $e_{k_1,k_2}=1, \forall i,\forall k_2 \geq k_1 \text{~such that~}  \phi_i+k_1p_i < d_i+k_2p_i$, constraints~\eqref{eq:dfConstraintsBigM}--\eqref{eq:eConstraintsBigM2} can be omitted.
Similar relations can be drawn pairwise for every two tasks, given specific temporal parameters. 
This approach greatly reduces the number of used variables and constraints, especially for large~hyperperiods.

A similar reasoning can be applied to variables $a_{k,j,m}^i$ that control the peak-frame timing. Given specific temporal parameters of tasks, it is not necessary to encode appearance of the $j^\text{th}$ authentication block (i.e., $j^\text{th}$ sequence of $m$ consecutive peak frames) for all instants in the time testing set, as suggested by the general definitions given in ~\eqref{eq:aVariables}--\eqref{eq:aVariablesIndices}. This is true since we only seek to find a schedulable solution, which implies that the $j^\text{th}$ authentication block must occur during the interval $\left[(j-1)l_ip_i,jl_ip_i\right]$, outside of which the value of $a_{k,j,m}^i$ is fixed and fully determined by tasks' temporal parameters. Formally,
$$(\forall i,j,k,m) (t_k > j l_i p_i \Rightarrow a_{k,j,m}^{i}=0 \:and\: t_k < (j-1) l_i p_i \Rightarrow a_{k,j,m}^{i}=1).$$
Similar holds for normal frames that must be scheduled within their respective periods:
$$(\forall i,k,h) (t_k > h p_i \Rightarrow b_{k,h}^{i}=0 \:and\: t_k < (h-1) p_i \Rightarrow b_{k,h}^{i}=1),$$
and thus the majority of constraints~\eqref{eq:bVariables} 
and corresponding variables $b_{k,h}^i$ that control normal frame timing can be eliminated. 
By enforcing these rules during problem encoding, the number of variables and constraints required to encode a realistic problem vastly reduces.


\subsection{MILP Decomposition}
\label{sec:decomposition}
Even with the discussed reductions in number of variables and constraints, the presented MILPs may remain relatively complex for very large transaction sets. For these scenarios, we propose a decomposition approach that formulates the synthesis of schedulable secure control transactions as a sequence of MILPs, rather than a single program, since the  schedulability tests from  Section~\ref{sec:schedulabilityAnalysis} can be decoupled between the ECUs and network. However, as we consider a parameter synthesis problem, rather than just a schedulability test, this decomposition is nontrivial --  schedulable task parameters obtained for one part of the system do not guarantee feasibility of the remaining parts. In fact, the decomposition approach 
directly depends on the system architecture and its~implementation.

\subsubsection{Synchronous Sensing Platform Model}
\label{subsec:syncSens}
The most commonly adopted platform model in offset-based scheduling of control transactions (as in~\cite{relaxingPeriodicityCAN}) assumes that all sensing tasks are initially released at the same time  (i.e.,~$\forall i, \phi_i^{sens}=0$ and $t_0=n\cdot p_i$ for some $n$ in \figref{endToEndTiming}). In this case, in the first stage, we can run the ECUs' parameter synthesis MILP. In this case, our objective could ensure that sensing tasks are scheduled as early as possible (minimized deadlines) while the opposite is desired for receiving tasks,~i.e., they should execute as late as possible during their respective periods (maximized offsets), in order to ensure that the least conservative timing constraints are imposed on network messages (\figref{endToEndTiming}).
Trying to minimize all $d_i^{sens}$ and maximize all $\phi_i^{ctrl}$ effectively results in multivariate optimization that we solve by associating weights with each of the objectives (i.e., using blended objective).
In the second stage, the network parameter synthesis MILP is formulated as a feasibility problem (without objective) searching for message offsets and deadlines that yield in a feasible transaction set. 

Alternatively, in the first stage, we can run the network parameter synthesis MILP with the objective to maximize message offsets $\phi_i^{net}$ (which `leaves' time for transmitting tasks) and minimize deadlines $d_i^{net}$ (which `leaves' time for receiving tasks). However, these objectives are conflicting, and since they have to be specified as a single blended objective function, heuristics can be used to adjust weights of individual message offsets and deadlines according to the execution times of sensing and control tasks (i.e.,~if the sensing task's WCET is longer than the control task's WCET, the message should be delayed more towards the end of the period). In the second stage, the ECUs' parameter synthesis MILP is formulated as a feasibility problem. However, there exist scenarios where this model is not the most accurate one; for instance, an ECU attached to multiple sensors may not necessarily have the capability to sample them instantaneously.

Consequently, our approach is that  in the first stage we execute the MILP formulation with lower complexity, that is better suited for this architecture, since that would reduce the time cost of reconfiguring task sets in the case that the MILP solver initially returns no solution. 

\subsubsection{Synchronous Network Access Platform Model}
Another option is to assume that network access is synchronized -- i.e.,~$\forall i, \phi_i^{net}=0$ and $t_1=n\cdot p_i$ for some $n$ in \figref{endToEndTiming}. In this case, the network MILP for parameter synthesis is executed first, with only message deadlines being subject to minimization to `leave' most time for sensing and control -- resulting in the most efficient problem decomposition. On the other hand, if the ECUs MILP is run first, both sensing deadlines should be minimized and control offsets should be maximized as described in Section~\ref{subsec:syncSens}. Then, in the second stage, the ECUs' synthesis MILP is a feasibility problem, with additional simplifications since the constraints~\eqref{eq:c1}-\eqref{eq:c3} become active (i.e., equalities hold), and for all $i$, $d_i^{net}$ are pre-specified and $\phi_i^{net}=0$.
In terms of complexity, this approach is appropriate for large problems since it decouples the ECU and network analysis. Consequently, this reduces the number of variables and constraints per program since now only a part of the time testing instants remain variables.

\section{Opportunistic Authentications} 
\label{sec:opportunistic}

The design-time framework from Section~\ref{sec:MILP} addresses Problem~1, resulting in schedulable secure control transactions  with the desired levels of QoC even in the presence of attacks. However, the overall QoC guarantees may be improved if the overall authentication rates, captured by $l_i$'s, are increased, which can be achieved if additional system resources (ECU time, network bandwidth) are available. While the QoC degradation curves capture the dependency between QoC and authentication rates (i.e., $l_i$), making the distances between authentications $l_i$ variables, instead of predefined values obtained from the QoC requirements, as part of the presented MILP  does not scale. Consequently, the methods we introduced in~\cite{lesi_tecs17,lesi_rtss17} to  optimally allocate resources in systems where only network or only ECE scheduling is considered,  such that the overall QoC under attack is maximized 
cannot be employed for systems featuring many tasks/messages when both network and task scheduling is considered. 


On the other hand, for secure transactions with periodic cumulative authentication policies $\mu_i(s_i,l_i,f_i)$ obtained by the MILP-based framework from Section~\ref{sec:MILP},
ECUs and the network will commonly not be entirely utilized at runtime. Thus, in this section we consider the problem of how intermittent authentication can be added at runtime, on top of a system for which we already obtained strong timeliness and QoC-under-attack guarantees (i.e., Problem~2). 
As our goal is to develop a runtime scheme that allocates available resources (CPU/network time) to authenticate additional sensor messages, we assume that the following holds. 
First, each ECU needs to have the knowledge of the network's busy intervals, or equivalently, of the temporal parameters of the network's workload, to ensure that additional transmitted MACs do not affect timeliness of existing periodic traffic. This is a valid assumption in low-level control networks (e.g.,~CAN bus that is considered in the case study in Section~\ref{sec:caseStudy}),
where traffic patterns are fully defined at design-time. Secondly, each ECU needs to have knowledge of its own available processing time, 
to ensure that additional MAC signing or verification can be performed without violating timing constraints of existing transactions, and other periodic and worst-case sporadic workload.
%
This is typically satisfied for constrained embedded platforms targeted by this general framework, 
as they commonly execute reservation-based RTOSs that enforce runtime timeliness guarantees. 

In such systems, our goal is to develop a runtime policy to determine the optimal, or near-optimal \emph{opportunities} for additional sensor measurements to be authenticated. In essence, this policy defines ECU-side computation of the priority level with which the specific MAC transmission will compete with other ECUs attempting to opportunistically authenticate additional sensor measurements.

Intermittent authentications should only be allowed outside the times captured by the deployed periodic cumulative authentication
 policies $\mu(s_i,l_i,f_i)$. 
 To improve the overall QoC guarantees, we consider QoC degradation curves $\mathcal{J}_i$ for every plant, and assign priority to a MAC transmission based on the level of improvement in the overall QoC that the specific authenticated measurement would contribute. Specifically, we assign a reward $r_i(t)$ at time $t$ to an opportunistic authentication~as
$$r_i(t)=\omega_i\mathcal{J}_i(\Delta l_i(t),f_i)\text{, where } \Delta l_i(t) = \floor*{\frac{\min \left({t-t_{i_{k-1}}}, t_{i_{k}}-t\right) }{p_i}},$$
where $t_{i_{k-1}}$ and $t_{i_{k}}$ are the nearest preceding and superseding  periodic authentication release times. 
This ensures that additional authentications are favored in the middle of periods of regularly scheduled authentications, as that results in tighter bounds on the attacker. Moreover, the weights~$\omega_i$ facilitate boosting priority of more important plants (e.g.,~steering over climate control). 

This approach is practical as the light-weight priority computation can be performed on the ECU itself in the case of the CAN bus, as the standard CAN protocol incorporates message priorities into the message identification field, while transmission conflicts are intrinsically resolved. Alternatively, the centralized scheduler assumed in TTCAN networks can enforce this policy, while each ECU in FlexRay networks features a \emph{bus guardian}, that enforces design-time network access patters at runtime, and can be augmented with the aforementioned functionality. In Section~\ref{sec:caseStudy}, we demonstrate how this approach can be used to significantly improve QoC under attack at runtime, at the expense of small amounts of utilized processing times and network bandwidth.



\section{Evaluation}
\label{sec:evaluation}

In this section, we evaluate our approach both on synthetic transaction sets (Section~\ref{sec:generalEvaluation}) and a realistic automotive case-study (Section~\ref{sec:caseStudy}).

\begin{figure}[!t]
	\centering
	\pgfplotsset{every axis/.append style={line width=.5pt}}
	\resizebox{.88\textwidth}{!}{
%
%
\definecolor{mycolor1}{rgb}{0.00000,0.44700,0.74100}%
\definecolor{mycolor2}{rgb}{0.85000,0.32500,0.09800}%
\begin{tikzpicture}

\begin{axis}[%
width=6in,
height=0.9in,
scale only axis,
xmin=0.08,
xmax=0.92,
xticklabel style={font=\large},
xlabel style={font=\color{white!15!black}, font=\large},
xlabel={Task/message set utilization},
ymin=0,
ymax=500,
ytick={0,100,200,300,400,500},
yticklabel style={font=\large},
ylabel style={font=\color{white!15!black}, font=\large, align=center},
ylabel=Solver runtime {[s]},
axis background/.style={fill=white},
xmajorgrids,
ymajorgrids
]
\addplot [color=mycolor1, dashed, forget plot, line width=1.0pt]
  table[row sep=crcr]{%
0.1	366.767337526316\\
0.2	252.96346686\\
0.3	312.857900565217 \\
0.4	265.297659804348\\
0.5	361.182583826087\\
0.6	166.3341495\\
0.7	131.874445695652\\
0.8	108.231656630435\\
0.9	85.4749761956522\\
};
\addplot [color=mycolor1, line width=4.0pt, draw=none, forget plot, mark size=4]
 plot [error bars/.cd, y dir = both, y explicit]
 table[row sep=crcr, y error plus index=2, y error minus index=3]{%
0.1	366.767337526316	82.8247088703754	82.8247088703754\\
0.2	252.96346686	68.1756321426797	68.1756321426797\\
0.3	312.857900565217	95.4246962498722	95.4246962498722\\
0.4	265.297659804348	83.7461835152945	83.7461835152945\\
0.5	361.182583826087	89.9586	89.9586\\
0.6	166.3341495	72.0221374110238	72.0221374110238\\
0.7	131.874445695652	62.4186894832417	62.4186894832417\\
0.8	108.231656630435	53.5610784812208	53.5610784812208\\
0.9	85.4749761956522	40.7303872536321	40.7303872536321\\
};
\end{axis}
\end{tikzpicture}
	\caption{Average Gurobi solver runtime and $95\%$ confidence intervals for synthetic systems with utilizations $0.1-0.9$, constructed in accordance to the guidelines from~\cite{automotiveBenchmarks2015}.}
	\label{fig:solverRuntime}
\end{figure}
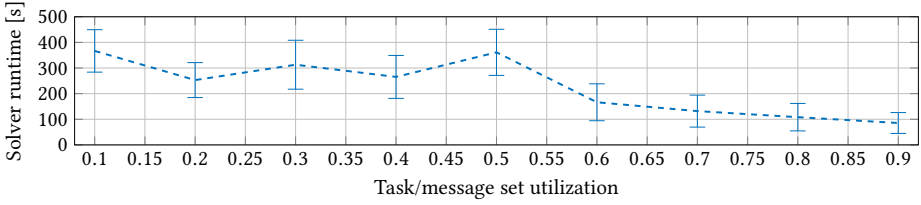
\begin{table}[!t]
\centering
\caption{Distribution of tasks and messages among periods in synthetic workloads used for generic evaluation, as well as non-QoC-related workloads used for the case study; the tasks and messages were obtained using the guidelines for automotive benchmarks from~\cite{automotiveBenchmarks2015}.}
    \begin{tabular}{c|c|c}
      \hline
      \begin{tabular}{@{}c@{}}period\\$[ms]$\end{tabular} & \begin{tabular}{@{}c@{}}share of preemptive\\(ECU) workload\end{tabular} & \begin{tabular}{@{}c@{}}share of non-preemptive\\(CAN bus) workload\end{tabular} \\ \hline
      \rowcolor[gray]{0.8} 5 & 2.5 \% & 2.63 \%\\
      10 & 31.25 \%& 32.89 \%\\
      \rowcolor[gray]{0.8} 20 & 31.25 \%& 32.89 \%\\
      50 & 3.75 \%& 3.95 \%\\
      \rowcolor[gray]{0.8} 100 & 25 \%& 26.32 \%\\
      200 & 1.25 \%& 1.32 \%\\
      \rowcolor[gray]{0.8} 1000 & 5 \%& ---\\
      \hline
    \end{tabular}
    \label{tab:workloadDistribution}
\end{table}

\subsection{Evaluation on Synthetic Systems}
\label{sec:generalEvaluation}
To evaluate general performance of our approach, we generate over $5000$ 
 synthetic systems,  each featuring 10 to 50 control transactions,  following 
the guidelines for design of automotive benchmarks from~\cite{automotiveBenchmarks2015}. Since the
guidelines focus on defining ECU-bound workloads, we redistribute the angle-synchronous workload\footnote{Angle-synchronous tasks have periods that depend on the engine speed -- i.e., the crankshaft angle determines job release.} and workloads with periods $1~ms$, $2~ms$ 
evenly to workloads with other periods. This is done for synthetic message sets 
as most practical network workloads do not include messages with such short 
periods.
Similar benchmark modifications were used in~\cite{ZengCANFD_2017}, and the resulting distribution among periods is summarized in Table~\ref{tab:workloadDistribution}.
As in~\cite{automotiveBenchmarks2015}, 
we 
scale execution times 
to assess performance under different utilization levels. Message transmission times are computed based on full-size CAN bus payload of $64~bits$ by varying transmission rate, to vary network utilization levels.  Finally, we randomly assign extended frame distances, and cumulative authentication block lengths in the range $l_i\in[1,5]$ and $f_i\in[1,3]$, while assuming that $25\%-50\%$ of tasks/messages are QoC-related (and the remaining 
workload are standard real-time tasks/messages).

We evaluate scalability of our framework by applying the decomposed MILP approach to all synthetic systems to complete the generated transaction sets. 
\figref{solverRuntime} summarizes Gurobi solver~\cite{gurobi} runtime as a function of the number of tasks/messages and task/message set utilization,\footnote{All computations are performed on a Sandy Bridge EP-based workstation with dual 3.3~GHz Intel Xeon CPUs  and 64GB of~memory.} showing applicability of our approach. 
Larger task sets typically cause longer solver runtime due to a generally larger parameter space. Relatively large variability can be attributed to random extended frame distances, which determine the hyperperiod and harmonicity of extended frame executions. Also, solver runtime is generally lower for unschedulable transaction sets regardless of the number of tasks since the solver is typically able to quickly prune large portions of the variable space which expedites conclusions about unschedulability 
-- average runtime in this case is $55~s$.

\subsection{Case Study}
\label{sec:caseStudy}

\begin{figure*}[!t]
	\begin{center}
	\subfigure [Adaptive cruise control]
	{
        \pgfplotsset{every axis/.append style={line width=0.7pt}}
        \resizebox{0.3\textwidth}{!}{
%
%
\definecolor{mycolor1}{rgb}{0.00000,0.44700,0.74100}%
\definecolor{mycolor2}{rgb}{0.85000,0.32500,0.09800}%
\definecolor{mycolor3}{rgb}{0.92900,0.69400,0.12500}%
\definecolor{mycolor4}{rgb}{0.49400,0.18400,0.55600}%
\definecolor{mycolor5}{rgb}{0.46600,0.67400,0.18800}%
\begin{tikzpicture}

\begin{axis}[%
width=2.5in,
height=1.2in,
scale only axis,
xmin=0,
xmax=15,
xlabel style={font=\color{white!15!black}, font=\Large},
xlabel={Inter-enforcement distance $l_{ACC}$},
ymin=0,
ymax=0.9,
ylabel style={align=center, font=\color{white!15!black}, font=\Large},
ylabel=Max. attack-induced\\state est. error,
yticklabel style={/pgf/number format/fixed, /pgf/number format/precision=3, font=\Large},
ytick={0,0.2,0.4,0.6,0.8,0.9},
scaled y ticks=false,
axis background/.style={fill=white},
xmajorgrids,
ymajorgrids,
]
\addplot [color=mycolor2, line width=1.0pt]
  table[row sep=crcr]{%
1	0\\
2	0.34904\\
3	0.42962\\
4	0.47273\\
5	0.51413\\
6	0.54932\\
7	0.58485\\
8	0.61551\\
9	0.6471\\
10	0.67468\\
11	0.70338\\
12	0.72864\\
13	0.75511\\
14	0.77854\\
15	0.80323\\
};

\addplot [color=mycolor3, line width=1.0pt]
  table[row sep=crcr]{%
1	0\\
2	0\\
3	0.33109539\\
4	0.40847539\\
5	0.44932539\\
6	0.49432539\\
7	0.52897539\\
8	0.56670539\\
9	0.59731539\\
10	0.63044539\\
11	0.65812539\\
12	0.68797539\\
13	0.71339539\\
14	0.74077539\\
15	0.76439539\\
};

\addplot [color=mycolor4, line width=1.0pt]
  table[row sep=crcr]{%
1	0\\
2	0\\
3	0\\
4	0.32880539\\
5	0.40578539\\
6	0.44632539\\
7	0.49187539\\
8	0.52646539\\
9	0.56453539\\
10	0.59514539\\
11	0.62848539\\
12	0.65618539\\
13	0.68620539\\
14	0.71165539\\
15	0.73917539\\
};

\addplot [color=mycolor5, line width=1.0pt]
  table[row sep=crcr]{%
1	0\\
2	0\\
3	0\\
4	0\\
5	0.32849539\\
6	0.40542539\\
7	0.44592539\\
8	0.49156539\\
9	0.52614539\\
10	0.56427539\\
11	0.59487539\\
12	0.62826539\\
13	0.65596539\\
14	0.68602539\\
15	0.71147539\\
};

\end{axis}
\end{tikzpicture}
		\label{fig:AdaptiveCruise}
	}
	\subfigure  [Driveline management]
	{
        \pgfplotsset{every axis/.append style={line width=0.7pt}}
        \resizebox{0.3\textwidth}{!}{
%
%
\definecolor{mycolor1}{rgb}{0.00000,0.44700,0.74100}%
\definecolor{mycolor2}{rgb}{0.85000,0.32500,0.09800}%
\definecolor{mycolor3}{rgb}{0.92900,0.69400,0.12500}%
\definecolor{mycolor4}{rgb}{0.49400,0.18400,0.55600}%
\definecolor{mycolor5}{rgb}{0.46600,0.67400,0.18800}%
\begin{tikzpicture}

\begin{axis}[%
width=2.5in,
height=1.3in,
scale only axis,
xmin=0,
xmax=15,
xlabel style={font=\color{white!15!black}, font=\Large},
xlabel={Inter-enforcement distance $l_{DM}$},
ymin=0,
ymax=1.4,
ylabel style={align=center, font=\color{white!15!black}, font=\Large},
yticklabel style={/pgf/number format/fixed, /pgf/number format/precision=3, font=\Large},
ytick={0,0.2,0.4,0.6,0.8,1.0,1.2,1.4},
axis background/.style={fill=white},
xmajorgrids,
ymajorgrids,
legend style={at={(1.1,1.1)}, anchor=south east, legend cell align=left, align=left, draw=white!15!black, draw = none, legend columns = 5, line width=3.0pt, font=\Large}
]
\addplot [color=mycolor1, line width=1.0pt]
  table[row sep=crcr]{%
1	0\\
2	0.25131\\
3	0.33449\\
4	0.47678\\
5	0.58235\\
6	0.72102\\
7	0.82008\\
8	0.93428\\
9	1.0096\\
10	1.0904\\
11	1.1396\\
12	1.1891\\
13	1.2177\\
14	1.2445\\
15	1.2594\\
};
\addlegendentry{$f=1$}

\addplot [color=mycolor2, line width=1.0pt]
  table[row sep=crcr]{%
1	0\\
2	0\\
3	0.234563\\
4	0.306023\\
5	0.431463\\
6	0.520453\\
7	0.646373\\
8	0.736793\\
9	0.848953\\
10	0.926223\\
11	1.014053\\
12	1.071153\\
13	1.130853\\
14	1.167753\\
15	1.203853\\
};
\addlegendentry{$f=2$}

\addplot [color=mycolor3, line width=1.0pt]
  table[row sep=crcr]{%
1	0\\
2	0\\
3	0\\
4	0.234063\\
5	0.305043\\
6	0.430143\\
7	0.518773\\
8	0.644513\\
9	0.734803\\
10	0.847003\\
11	0.924353\\
12	1.012353\\
13	1.069653\\
14	1.129653\\
15	1.166753\\
};
\addlegendentry{$f=3$}

\addplot [color=mycolor4, line width=1.0pt]
  table[row sep=crcr]{%
1	0\\
2	0\\
3	0\\
4	0\\
5	0.234053\\
6	0.305043\\
7	0.430143\\
8	0.518773\\
9	0.644513\\
10	0.734793\\
11	0.846993\\
12	0.924343\\
13	1.012353\\
14	1.069653\\
15	1.129653\\
};
\addlegendentry{$f=4$}

\addplot [color=mycolor5, line width=1.0pt]
  table[row sep=crcr]{%
1	0\\
2	0\\
3	0\\
4	0\\
5	0\\
6	0.234053\\
7	0.305033\\
8	0.430143\\
9	0.518773\\
10	0.644513\\
11	0.734793\\
12	0.846993\\
13	0.924343\\
14	1.012353\\
15	1.069653\\
};
\addlegendentry{$f=5$}

\end{axis}
\end{tikzpicture}
		\label{fig:DrivelineManag}
	}
	\subfigure [Lane keeping]
	{
        \pgfplotsset{every axis/.append style={line width=0.7pt}}
        \resizebox{0.29\textwidth}{!}{
%
%
\definecolor{mycolor1}{rgb}{0.00000,0.44700,0.74100}%
\definecolor{mycolor2}{rgb}{0.85000,0.32500,0.09800}%
\definecolor{mycolor3}{rgb}{0.92900,0.69400,0.12500}%
\definecolor{mycolor4}{rgb}{0.49400,0.18400,0.55600}%
\definecolor{mycolor5}{rgb}{0.46600,0.67400,0.18800}%
\begin{tikzpicture}

\begin{axis}[%
width=2.5in,
height=1in,
scale only axis,
xmin=0,
xmax=15,
xlabel style={font=\color{white!15!black}, font=\large},
xlabel={Inter-enforcement distance $l_{LK}$},
ymin=0,
ymax=1,
ylabel style={align=center, font=\color{white!15!black}, font=\large},
yticklabel style={/pgf/number format/fixed, /pgf/number format/precision=3, font=\large},
axis background/.style={fill=white},
xmajorgrids,
ymajorgrids,
]
\addplot [color=mycolor1, line width=1.0pt]
  table[row sep=crcr]{%
1	0\\
2	0.085825\\
3	0.13434\\
4	0.18008\\
5	0.25041\\
6	0.30783\\
7	0.38884\\
8	0.4549\\
9	0.53617\\
10	0.60413\\
11	0.67874\\
12	0.74128\\
13	0.80452\\
14	0.85663\\
15	0.90659\\
};

\addplot [color=mycolor2, line width=1.0pt]
  table[row sep=crcr]{%
1	0\\
2	0\\
3	0.07990063\\
4	0.12062763\\
5	0.15914763\\
6	0.21512763\\
7	0.26657763\\
8	0.33229763\\
9	0.39308763\\
10	0.46419763\\
11	0.52941763\\
12	0.60073763\\
13	0.66483763\\
14	0.73081763\\
15	0.78877763\\
};

\addplot [color=mycolor3, line width=1.0pt]
  table[row sep=crcr]{%
1	0\\
2	0\\
3	0\\
4	0.07960863\\
5	0.11995763\\
6	0.15837763\\
7	0.21394763\\
8	0.26540763\\
9	0.33075763\\
10	0.39164763\\
11	0.46244763\\
12	0.52781763\\
13	0.59890763\\
14	0.66320763\\
15	0.72902763\\
};

\addplot [color=mycolor4, line width=1.0pt]
  table[row sep=crcr]{%
1	0\\
2	0\\
3	0\\
4	0\\
5	0.07960663\\
6	0.11994763\\
7	0.15837763\\
8	0.21393763\\
9	0.26540763\\
10	0.33074763\\
11	0.39163763\\
12	0.46242763\\
13	0.52779763\\
14	0.59887763\\
15	0.66316763\\
};

\addplot [color=mycolor5, line width=1.0pt]
  table[row sep=crcr]{%
1	0\\
2	0\\
3	0\\
4	0\\
5	0\\
6	0.07960563\\
7	0.11994763\\
8	0.15837763\\
9	0.21393763\\
10	0.26540763\\
11	0.33074763\\
12	0.39163763\\
13	0.46242763\\
14	0.52779763\\
15	0.59886763\\
};

\end{axis}
\end{tikzpicture}
		\label{fig:LaneTrack}
	}
	\end{center}
\captionsetup{aboveskip=-1pt}\caption{QoC degradation curves for three considered systems --- maximal attack-induced state estimation error is bounded given a specific integrity enforcement policy determined by inter-enforcement distance~$l_i$ and authentication block length~$f_i$. Note that the adaptive cruise control system requires at least two consecutive measurements to be authenticated ($f_{ACC}^{sens}\geq2$).}
\label{fig:AllQocCurves}
\end{figure*}
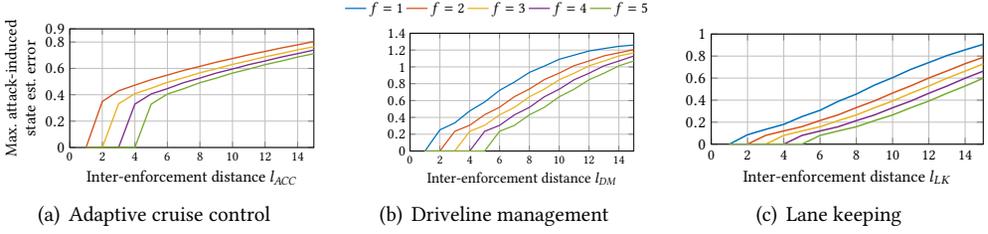
\begin{figure*}[!t]
\begin{center}
	\centering
	\pgfplotsset{every axis/.append style={line width=0.5pt}}
	\resizebox{\textwidth}{!}{\input{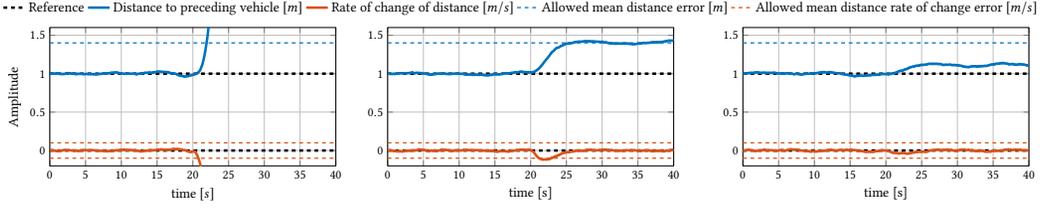}}
	\caption{Adaptive cruise control QoC under stealthy attack (starts at $t=20~s$) without integrity enforcements (left), periodic cumulative authentication with $l_{ACC}=5$ (center), and with intermittent cumulative  authentication with $\hat{l}_{ACC}=2.5$ (right).}
	\label{fig:ACCsim}
\end{center}
\end{figure*}
\begin{figure*}[!t]
\begin{center}
	\centering
	\pgfplotsset{every axis/.append style={line width=0.5pt}}
	\resizebox{\textwidth}{!}{\input{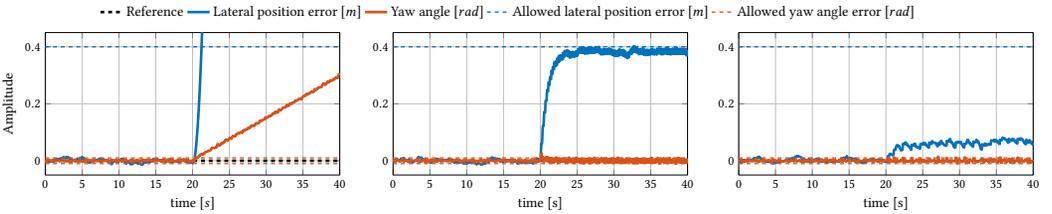}}
	\captionsetup{aboveskip=-0.2pt}\caption{Lane keeping QoC under stealthy attack (starts at $t=20~s$) without integrity enforcements (left), with a periodic cumulative  authentication with $l_{LK}=10$ (center), and with intermittent cumulative  authentication with $\hat{l}_{LK}=2.86$ (right).}
	\label{fig:LKsim}
\end{center}
\end{figure*}

We consider a realistic automotive case study where controllers for 
adaptive cruise control, lateral control for lane tracking, and driveline management, are mapped onto 3 out of 8 ECUs, with all ECUs also executing non-QoC-related workload 
as in~Table~\ref{tab:workloadDistribution}.
To model the controlled physical plants, we adopted physical system models 
from~\cite{ACCmodel},~\cite{rajamani2011vehicle},~\cite{drivelineThesis}. 
The control tasks 
are receiving sensor measurements from the eight ECUs 
communicating via a shared CAN bus. 
The network load consists of $70$ full-sized CAN frames with period distribution specified in~Table~\ref{tab:workloadDistribution}, and $8$ full-sized CAN frames carrying sensor measurements with period $p_{ACC}=p_{LK}=p_{DM}=20~ms$. As $64~bit$ MACs are used to sign sensor measurements, to ensure low probability of forgery, an entire additional frame needs to be transmitted for an authenticated measurement, as the standard CAN payload is only $64~bits$. 
%
With the standard $1 Mbps$ CAN 
rate, 
regardless of ECU utilization the system is not schedulable 
when every sensor measurement is signed.  

\figref{AllQocCurves} shows QoC degradation curves for these systems, based on which we can map admissible levels of state estimation error due to attack into computation and bandwidth requirements. Specifically, we assume that state estimation error due to attack of no more than $0.4~m$ in distance to preceding vehicle, and no more than $0.1~\frac{m}{s}$ in speed is allowed in the case of adaptive cruise control. Similarly, maximum attack-induced state estimation errors for lateral position error, its rate of change, yaw angle error and its rate of change are set to $0.4~m$, $0.1~\frac{m}{s}$, $0.01~rad$, and $0.01~\frac{rad}{s}$, respectively. Finally, drive-shaft torsion, and its rate of change state estimation errors due to attack are limited to $0.02~rad$ and $1~\frac{rad}{s}$, respectively. Thus, inter-enforcement distances and authentication block lengths resulting from these requirements are $l_{ACC}=5, f_{ACC}=3$; $l_{LK}=10, f_{LK}=2$; $l_{DM}=10, f_{DM}=1$. 

Under these conditions, in the first step of our decomposed MILP approach, Gurobi solver takes an average of $2716~s$ to return minimal deadlines for the considered message set and assign initial authentication start times such that timeliness can be guaranteed for network messages.
%
In the second step, 
for a MILP that encompasses conditions for the three control ECUs, conditioned by the previously obtained message deadlines, Gurobi takes an average of $937~s$ to complete the secure transaction set with schedulable sensing task offsets and control task deadlines. 
\figref{ACCsim} and \figref{LKsim} show the resulting trajectories for adaptive cruise control and lane keeping systems, when stealthy attacks start at $t=10~s$. 
\figref{ACCsim}~and \figref{LKsim}(left) show effects of the attack without authentication; both longitudinal and lateral control of the vehicle are entirely taken over by the stealthy attacker. \figref{ACCsim}~and~\figref{LKsim}(center) show~how the attack impact is contained within permissible limits when integrity of sensor data is enforced with the aforementioned periodic cumulative policies, resulting in network utilization of $U_{net}=0.68$.

To demonstrate benefits of using opportunistic scheduling to further improve the overall QoC under attack, we simulate additional sporadic network traffic as well as opportunistically add MACs (as described in Section~\ref{sec:opportunistic}) to sensor measurements that are not authenticated by periodic cumulative authentications. Sporadic messages are assumed to arrive with minimum inter-arrival time of $10~ms$ 
utilizing up to $5\%$ of the network bandwidth.
The resulting mean inter-authentication distance for the three systems under consideration is $\hat{l}_{ACC}=2.5$, $\hat{l}_{LK}=2.86$, and $\hat{l}_{DM}=2.31$, respectively. \figref{ACCsim}~and~\figref{LKsim}(right) show significantly improved QoC levels under attack, 
while the shared network utilization increases on average by $10\%$ due to opportunistic authentications. The final network utilization 
is $U_{net}=0.84$. ECU utilization increases on average by only $1.5\%$ to support signing and verification of additional~MACs, 
illustrating the applicability of the presented~framework.

\section{Related Work}
\label{sec:relatedWork}

Integrating security guarantees into legacy and resource-constrained systems has attracted significant research attention.
For instance, in~\cite{OpportunisticExecutionLegacyRT} the authors explore opportunistic execution of security services in legacy real-time systems, while leveraging hierarchical scheduling to ensure that schedulability of existing tasks is not impaired. The security performance metric proposed therein is the frequency of periodic execution of security services. In~\cite{ImprovingEmbeddedSecurity}, a novel scheduling policy is proposed for embedded systems to ensure schedulability of real-time control tasks subject to both timing and security constraints. This is achieved by optimal distribution of slack times which are computed after schedulability of existing control tasks is guaranteed. Among a variety of security services, an optimal schedule is constructed based on abstract relative security levels. 
In~\cite{StaticRTsecurity}, the authors devise a security-aware EDF schedulability test. Therein, security services are grouped by security level and execution of security services from different groups is combined to increase Quality-of-Security (e.g., message encryption can be combined with authentication to protect both confidentiality and integrity of transmitted data). Consequently, group-based security model is integrated with EDF scheduling and a security-aware optimization problem is formulated around scheduling of suitable security services given a set of real-time tasks. 
However, it is important to highlight that no existing work provides a direct relationship between resource utilization and actual systems' performance pertaining to its main functionality (i.e., control performance, Quality-of-Control) -- in fact, only abstract \emph{security levels} are considered.

Transaction scheduling is typically considered separately for time- and event-triggered communication models. Event-triggered transaction scheduling requires additional overheads for event signalling, i.e.,~synchronization between the transmitting and receiving nodes is explicitly obtained by transmission of additional messages. Examples of works addressing analysis of such transaction implementation schemes are~\cite{distrSync1,distrSync2}. For systems where network traffic patterns are determined by design, and resources (both processor computation power and network bandwidth) are severely constrained, satisfaction of timing constraints for transactions can be achieved by careful offset/dedline enforcement --- the approach considered in this paper. Traditional offset-based schedulability analysis for distributed systems under rate monotonic were presented in the original \emph{holistic analysis} framework from~\cite{holisticDistributedRM}, and further improved in~\cite{staticDynamicOffsets,bcWCdistributedRT}. Furthermore, this analysis has been extended to EDF in~\cite{spuriEDFdistr1}. However, only the standard task models are observed, and these works mostly focus on computing response times while no optimization framework is devised to generate feasible offsets (or deadlines). In~\cite{feasibleDeadlines}, the authors develop a technique to compute a (sufficient) region of admissible deadlines given a set of tasks under EDF, which enables the designer to optimize the desired performance metric. However, this approach is non-trivial to integrate into an end-to-end schedulability analysis framework, due to its recursive algorithmic nature.


\section{Conclusion}
\label{sec:conclusion}

In this paper, we have presented a MILP-based framework for integrating security guarantees with end-to-end timeliness requirements for control transactions in resource-constrained CPS. We have shown that the use of physics-based anomaly/intrusion detectors and intermittent message authentication results in strong QoC performance guarantees in the presence of network-based attacks without significant security-related resource overhead. 
We have also shown how the security-related overhead can be additionally reduced with the use of cumulative authentication policies, which can be implemented such that real-time guarantees for control-related tasks and messages are retained, while QoC in the presence of attacks is maintained within the permissible design-time limits. In addition, we have presented a method to integrate intermittent authentication policies in a near-optimal manner from the QoC standpoint,  to opportunistically exploit available processor time and network bandwidth at runtime. As our approach fully supports cumulative authentication policies, it can be used for dynamical systems where solely authenticating a single sensor measurement periodically or intermittently is not sufficient to provide QoC guarantees under attack. Finally, for large-scale systems where a unified scheduling approach for all ECUs and network may  be intractable, we have shown how the problem can be decomposed in a platform/implementation-specific manner. We have demonstrated scalability and effectiveness of our approach on both synthetic systems and a realistic automotive case study and shown that security guarantees can be incorporated without violating existing timeliness properties even with limited resource availability.

\bibliographystyle{ACM-Reference-Format}
\bibliography{Bibliography_EMSOFT17-RTSS17,Bibliography_EMSOFT18,bibliography_SecureControl,PapersCPSL}

\end{document}